\documentclass[11pt,a4paper,twoside]{article}

\usepackage{amsmath,amssymb,amsthm}
\usepackage{microtype}

\usepackage[
    a4paper,
    top=27mm,
    bottom=27mm,
    inner=27mm,
    outer=27mm,
    footskip=12mm
]{geometry}

\usepackage{needspace}
\usepackage{etoolbox}
\usepackage{titlesec}

\titleformat{\section}
    {\Large\sffamily\bfseries}
    {\thesection}
    {0.65em}
    {}
\titleformat{\subsection}
    {\large\sffamily\bfseries}
    {\thesubsection}
    {0.65em}
    {}
\titleformat{\subsubsection}
    {\normalsize\sffamily\bfseries}
    {\thesubsubsection}
    {0.65em}
    {}

\titlespacing*{\section}
    {0pt}
    {3.2ex plus 0.8ex minus 0.4ex}
    {1.6ex plus 0.3ex}
\titlespacing*{\subsection}
    {0pt}
    {2.7ex plus 0.7ex minus 0.3ex}
    {1.2ex plus 0.2ex}
\titlespacing*{\subsubsection}
    {0pt}
    {2.3ex plus 0.6ex minus 0.3ex}
    {0.9ex plus 0.2ex}

\pretocmd{\section}{\Needspace{5\baselineskip}}{}{}
\pretocmd{\subsection}{\Needspace{4\baselineskip}}{}{}
\pretocmd{\subsubsection}{\Needspace{3\baselineskip}}{}{}

\numberwithin{equation}{section}

\usepackage{abstract}

\usepackage{graphicx}
\usepackage{caption}
\usepackage{enumitem}
\setlist{
    topsep=5pt,
    itemsep=2pt,
    parsep=0pt,
    partopsep=0pt
}
\setlist[itemize]{leftmargin=2em}
\setlist[enumerate]{leftmargin=2em}

\newtheoremstyle{thesisthm}
    {6pt}
    {6pt}
    {\normalfont}
    {}
    {\bfseries}
    {.}
    {0.5em}
    {}
\theoremstyle{thesisthm}
\newtheorem{thm}{Theorem}[section]
\newtheorem{lem}[thm]{Lemma}
\newtheorem{prop}[thm]{Proposition}

\usepackage{xcolor}
\definecolor{linkblue}{HTML}{4F6D8A}
\definecolor{citegreen}{HTML}{527A61}
\definecolor{urlred}{HTML}{8A5A63}
\usepackage[skins,breakable]{tcolorbox}
\tcbset{
    base/.style={
        enhanced,
        breakable,
        sharp corners,
        boxrule=0pt,
        leftrule=1mm,
        colframe=black!35,
        colback=black!3,
        colbacktitle=black!7,
        coltitle=black,
        fonttitle=\bfseries,
        left=2.5mm,
        right=2.5mm,
        top=1.5mm,
        bottom=1.5mm,
        toptitle=0.8mm,
        bottomtitle=0.8mm,
        before skip=8pt,
        after skip=8pt,
        title={#1}
    }
}
\newtcolorbox{subbox}[1]{base={#1}}

\usepackage{cite}
\usepackage[
    colorlinks=true,
    linkcolor=linkblue,
    citecolor=citegreen,
    urlcolor=urlred,
    filecolor=urlred,
    bookmarks=true,
    bookmarksopen=true,
    bookmarksnumbered=true,
    pagebackref=true
]{hyperref}

\renewcommand*{\backref}[1]{}
\renewcommand*{\backrefalt}[4]{%
    \ifcase #1
    \or
        \space{\footnotesize\textit{Cited on p.~#2.}}%
    \else
        \space{\footnotesize\textit{Cited on pp.~#2.}}%
    \fi
}

\begin{document}
\thispagestyle{plain}

\begin{flushright}
{\tt\normalsize ZMP-HH-26/34}\\
\end{flushright}

\vspace*{2.5cm}

\begin{center}
    {\LARGE\bfseries
        No Shortcuts to Infinity
        \par
    }

    \vspace{0.35cm}

    {\Large
        Weil--Petersson Distance in Calabi--Yau 
        Complex Structure Moduli Space
        \par
    }

    \vspace{1.6cm}

    {\large\bfseries Jeroen Monnee}
    \footnote{\href{mailto:jeroen.monnee@desy.de}
    {jeroen.monnee@desy.de}}

    \vspace{0.5cm}

    {\small
        II. Institut f\"ur Theoretische Physik,
        Universit\"at Hamburg,\\
        Notkestrasse 9, 22607 Hamburg, Germany
        \par
    }
\end{center}

\vspace{0.5cm}

\begin{abstract}
\noindent
Infinite-distance limits play a central role in the Swampland program, but understanding their behaviour at intersections of boundary divisors is subtle, as it requires control over simultaneous limits of several moduli. To address this, we establish a necessary and sufficient criterion for a boundary stratum of arbitrary codimension in the complex-structure moduli space of Calabi--Yau manifolds to lie at infinite Weil--Petersson distance. In particular, intersections of infinite-distance boundary divisors necessarily remain at infinite distance. The result holds for the full Weil--Petersson metric and arbitrary trajectories, including those with unrestricted hierarchies in the saxions and axion winding. In compactifications of Type IIB string theory on Calabi--Yau threefolds, our proof is based on an inequality relating the Weil--Petersson length of a trajectory to the BPS mass associated with an auxiliary charge vector. If this mass tends to zero while the charge-to-mass ratio remains uniformly bounded, every trajectory approaching the limit has infinite length. Using asymptotic Hodge theory, we construct an auxiliary charge vector satisfying these conditions and derive the criterion directly from the limiting mixed Hodge structure associated with the degeneration. More generally, our result applies to polarized variations of Hodge structure of Calabi--Yau type. In particular, we prove Wang's conjecture on finite Weil--Petersson distance for degenerations over higher-dimensional bases.
\end{abstract}

\newpage

\hypersetup{linkcolor=black}
\tableofcontents
\hypersetup{linkcolor=linkblue}

\setcounter{footnote}{0}
\newpage

\setcounter{section}{0}

\section{Introduction}
The Swampland Distance Conjecture (SDC) \cite{Vafa:2005ui,Ooguri:2006in} makes a prediction for the behaviour of effective theories that are consistent with quantum gravity, in asymptotic regions of their moduli space. More precisely, it states that, as the geodesic distance from a fixed point in the moduli space becomes arbitrarily large, an infinite tower of states becomes exponentially light in Planck units. Such limits therefore signal the breakdown of an effective description with a fixed cutoff. Over the past years, substantial progress on understanding the SDC has been made through both top-down tests in explicit string compactifications and bottom-up arguments based on general principles of quantum gravity. Testing the conjecture requires detailed control over the existence and stability of towers of states, their asymptotic mass scales, and the relation between these scales and the geodesic distance. In string compactifications, these questions are closely tied to the geometry of the internal space, leading to a rich interplay between geometry and fundamental properties of quantum gravity. We refer the reader to \cite{
Grimm:2018ohb,
Heidenreich:2018kpg,
Grimm:2018cpv,
Blumenhagen:2018nts,
Lee:2019oct,
Corvilain:2018lgw,
Joshi:2019nzi,
Marchesano:2019ifh,
Baume:2019sry,
Gendler:2020dfp,
Bastian:2020egp,
Calderon-Infante:2020dhm,
Lanza:2021udy,
Etheredge:2022opl,
Grimm:2022sbl,
Hassfeld:2025uoy,
Baines:2025upi,
Monnee:2025ynn} for a selection of works relevant for the present context, as well as \cite{Brennan:2017rbf,Palti:2019pca,vanBeest:2021lhn,Grana:2021zvf,Agmon:2022thq} for reviews and further references.

In this work, we address a more basic question which underlies these analyses: 
\begin{quotation}
    \textit{When} does a point in the moduli space lie at infinite distance?
\end{quotation}
Answering this question is crucial to identify the limits to which the SDC applies, and is therefore an essential part of understanding the domain of validity of effective theories coupled to gravity. 

This question becomes particularly delicate when several moduli are taken to be large simultaneously. As a basic example, consider the intersection $\Delta_{12}$ of two infinite-distance divisors $\Delta_1$ and $\Delta_2$, see also Figure~\ref{fig:boundary-intersection}. One can then ask the concrete question: Must their intersection also lie at infinite distance, or could there exist some exotic trajectory (a ``shortcut to infinity'') along which the distance to $\Delta_{12}$ is actually finite? This is complicated, because to argue that $\Delta_{12}$ is at infinite distance, one must consider every possible trajectory approaching it, including for example those that repeatedly bounce between asymptotic regimes and thus have no clear hierarchy in the moduli, or those that feature extreme winding in axionic directions. Furthermore, if the moduli space metric is only assumed to be K\"ahler (without additional structure), there are simple examples in which such ``shortcuts to infinity'' do exist, as we illustrate explicitly in Section~\ref{subsec:limits}.

\begin{figure}[h!]
    \centering
    \includegraphics[width=0.55\linewidth]{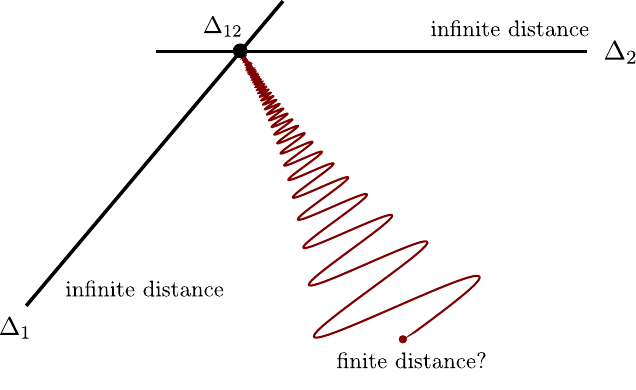}
    \caption{Two infinite-distance limits $\Delta_1$ and $\Delta_2$, and their intersection $\Delta_{12}$. We are interested in understanding whether the intersection is always at infinite distance, or whether there could exist some exotic trajectory along which the distance to $\Delta_{12}$ is finite.}
    \label{fig:boundary-intersection}
\end{figure}
Nevertheless, we expect the additional structure of moduli spaces arising in consistent theories of quantum gravity to exclude such finite-length shortcuts. A physical motivation for this expectation comes from the compatibility of the dual descriptions associated with different asymptotic regimes. Constraints on how such regimes fit together have been studied using the scalar charge-to-mass vectors of light towers \cite{Etheredge:2024tok}, while broader restrictions on moduli spaces and their duality groups have been motivated by finiteness arguments \cite{Hamada:2021yxy,Delgado:2024skw}. Indeed, for the intersection of infinite-distance boundaries to be at finite distance would require the distance estimates valid near the individual divisors to lose uniform control, which appears difficult to reconcile with a consistent matching of duality frames, including their light spectra and couplings.\footnote{See also \cite{Kaufmann_to_appear} for a related discussion.} This motivates investigating the question in moduli spaces that arise in string compactifications, where the underlying geometry provides additional tools to establish such restrictions. 

Calabi--Yau compactifications provide a particularly fruitful setting in which to study these questions. More precisely, we will consider compactifications of Type IIB string theory on Calabi--Yau threefolds, without fluxes or orientifolds. We will focus our attention on the vector multiplet moduli space of the resulting four-dimensional $\mathcal{N}=2$ supergravity theory, which is identified with the complex-structure moduli space of the Calabi--Yau threefold, equipped with the Weil--Petersson metric. In this setting, asymptotic Hodge theory \cite{schmid,CKS} has been used to classify degeneration limits into types I--IV and construct candidate towers of asymptotically massless states \cite{Grimm:2018ohb,Grimm:2018cpv,Corvilain:2018lgw}, and to determine the asymptotic behaviour of various physical couplings \cite{Grimm:2019wtx,Gendler:2020dfp,Bastian:2020egp,Bastian:2021hpc,Palti:2021ubp,Grimm:2022xmj}, periods \cite{Bastian:2021eom,Bastian:2023shf}, and the Weil--Petersson metric itself \cite{Grimm:2020cda,Grimm:2021ikg}, see also \cite{vandeHeisteeg:2022gsp,Monnee:2024gsq} for reviews and additional references. Furthermore, the physical interpretation of these limits in light of the Emergent String Conjecture \cite{Lee:2019oct} was recently illuminated in \cite{Hassfeld:2025uoy,Monnee:2025ynn,Monnee:2025msf}. Related geometric analyses have connected infinite-distance limits to emergent string limits and decompactification limits in the complex-structure moduli space of F-theory compactifications on elliptic K3 surfaces \cite{Lee:2021qkx,Lee:2021usk} and elliptic Calabi--Yau threefolds \cite{Alvarez-Garcia:2023gdd,Alvarez-Garcia:2023qqj}. 

On the mathematical side, a Hodge-theoretic criterion for finite Weil--Petersson distance in one-parameter degenerations was first established by Wang \cite{Wang1997}, and subsequently further investigated by Lee \cite{Lee2018} for higher-dimensional moduli spaces. In particular, it was shown that one-parameter degenerations are at infinite distance if and only if the singularity type is either II, III, or IV. We will discuss these results and their relation to our theorem in more detail below. The typical strategy in both these works is to write down an asymptotic expansion for the K\"ahler potential, and subsequently address the effect of corrections. However, this becomes very delicate in higher-dimensional moduli spaces. For example, the leading approximation to the K\"ahler potential need not give a non-degenerate approximation to the metric in every direction due to the presence of so-called metric-essential instantons \cite{Bastian:2021eom,Bastian:2021hpc}, and taking their effect into account appears difficult in general. 

\subsection*{Main results}
Instead of analysing the metric directly, our strategy is to bound the length of any trajectory by the variation of a \textit{scalar} quantity.\footnote{For related uses of logarithmic mass gradients in the study of the Distance Conjecture, see
\cite{Etheredge:2023zjk}.} In Type IIB string theory compactified on a Calabi--Yau threefold $Y$, this quantity is precisely the BPS mass associated with an auxiliary charge vector $q$. In the notation of Section~\ref{sec:physical-criterion}, we define
\begin{equation}
    \mathcal{M}_q=\frac{|(q,\Omega)|}{\|\Omega\|}\,,
    \qquad
    \mathcal{Q}_q=\frac{1}{\sqrt{2}}\|q\|\,,
    \label{eq:intro-mass-charge}
\end{equation}
where $\Omega$ is the holomorphic three-form, $(\cdot,\cdot)$ is the intersection pairing, and $\|\cdot\|$ denotes the Hodge norm. For a physical BPS state, these functions give its mass in Planck units and its physical charge with respect to the $\mathrm{U}(1)$ gauge fields in the gravity and vector multiplets. The relation between charge-to-mass ratios, scalar couplings, and asymptotic mass scales has already played an important role in connecting the Weak Gravity Conjecture and the Distance Conjecture \cite{Palti:2017elp,Gendler:2020dfp,Bastian:2020egp}. Our focus here is to use these quantities to formulate the following criterion for infinite distance. 
\vspace{0.3cm}
\begin{subbox}{Infinite distance in Calabi--Yau moduli space}
    Consider Type IIB string theory compactified on a Calabi--Yau threefold $Y$. Let $p_\infty$ be a point on the boundary of its complex-structure moduli space. Then every trajectory approaching $p_\infty$ has infinite length with respect to the full Weil--Petersson metric if and only if the degeneration at $p_\infty$ is of type II, III, or IV. \\

    More precisely, if the degeneration is of type II, III, or IV, there exists a non-zero, monodromy-invariant auxiliary charge $q\in H^3(Y,\mathbb{C})$ and a constant $C>0$ such that, throughout a sufficiently small neighbourhood of $p_\infty$, the following holds:
    \begin{align}
    \label{eq:intro-charge-to-mass}&\text{(1)}:\qquad \frac{\mathcal{Q}_q}{\mathcal{M}_q} \leq C\,,\\
    \label{eq:intro-mass}&\text{(2)}:\qquad \mathcal{M}_q\to 0\,,\qquad\text{as }p_\infty\text{ is approached}.
    \end{align}
    In addition, for every trajectory $\gamma$ connecting two points $p_1$ and $p_2$, its Weil--Petersson length satisfies
    \begin{equation}
        L(\gamma)
        \geq\frac{1}{C}
        \left|\log\frac{\mathcal{M}_q(p_2)}{\mathcal{M}_q(p_1)}\right|\,.
        \label{eq:intro-length-bound}
    \end{equation}
    Consequently, every trajectory approaching $p_\infty$ has infinite length with respect to the full Weil--Petersson metric.
\end{subbox}
\vspace{0.3cm}
We emphasize that the bound is uniform in a neighbourhood of the boundary point. In particular, it also applies to trajectories which feature arbitrary hierarchies in the saxions, as well as extreme winding in the axionic directions. Since the enhancement rules of \cite{Kerr2017,Grimm:2018cpv,Corvilain:2018lgw} forbid the singularity type to decrease, this implies that the intersection of two infinite-distance divisors is always at infinite distance, and similarly for higher codimension intersections. 

The proof itself only uses the period and Hodge norm in \eqref{eq:intro-mass-charge}. In particular, the auxiliary charge $q$ may be complex, and neither the existence nor the stability of a corresponding BPS state is necessary.\footnote{With a bit more work, the charge $q$ can be chosen to be real, and in fact integral if the underlying variation of Hodge structure is integral.} In fact, the same argument applies more generally to effective polarized real variations of Hodge structure of weight $D\geq1$ with $h^{D,0}=1$, which we call variations of Calabi--Yau $D$-fold type. This includes the geometric setting of families of polarized Calabi--Yau manifolds, but does not require such a geometric realization. 

To state the general result, we work locally over $(\Delta^*)^n\times\Delta^m$ near a boundary point $p_\infty$ where $n$ boundary divisors intersect normally. Here $\Delta$ is a disk, $\Delta^*$ is a punctured disk, and the coordinates $z^i$ and $\zeta^\alpha$ describe the degenerating and spectator directions, respectively, and are chosen such that $p_\infty = (0,0)$. We assume that the local monodromies around the punctured disks are quasi-unipotent.\footnote{In the geometric setting, this is guaranteed by the Monodromy Theorem \cite{Landman1973,schmid}.} After a finite base change, their logarithms $N_i=\log T_i$ are commuting nilpotent operators. The nilpotent orbit theorem of Schmid then gives a local description of the holomorphic generator $\Omega$ of the Hodge filtration $F^D$ of the form \cite{schmid,CKS}
\begin{equation}
    \Omega(z,\zeta)
    =\exp\left[\sum_{i=1}^n \frac{\log z^i}{2\pi\mathrm{i}}N_i\right]A(z,\zeta)\,,\qquad A(0,0) = a_0\,,
    \label{eq:intro-nilpotent-orbit}
\end{equation}
where $A$ extends holomorphically to the boundary. Introduce the notation $N=\sum_{i=1}^nN_i$. We prove the following theorem.
\vspace{0.3cm}
\begin{subbox}{\begin{thm}\label{thm:main}
    A Hodge-theoretic criterion for infinite distance
\end{thm}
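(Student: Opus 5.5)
The argument has two directions. For the infinite-distance direction we use the BPS-mass inequality \eqref{eq:intro-length-bound} with an auxiliary charge built from the limiting mixed Hodge structure. For the finite-distance direction we control the period map directly. Throughout, I read the criterion in Theorem~\ref{thm:main} as: $p_\infty$ lies at infinite Weil--Petersson distance if and only if $N a_0\neq 0$, and $N a_0\neq0$ is equivalent to type II, III or IV.

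\emph{Step 1: the length inequality.} Write $g_{\rm WP}=\|\Pi\nabla\Omega\|^2/\|\Omega\|^2$, where $\Pi$ is the Hodge-orthogonal projection onto the complement of $F^D$. Take any flat, monodromy-invariant $q$. Then $d\log\mathcal{M}_q=\mathrm{Re}\big[(q,\nabla\Omega)/(q,\Omega)\big]-\tfrac12 d\log\|\Omega\|^2$. The second term removes exactly the $F^D$-component of $\nabla\Omega$, so $d\log\mathcal{M}_q=\mathrm{Re}\big[(q,\Pi\nabla\Omega)/(q,\Omega)\big]$. The Cauchy--Schwarz inequality for the Hodge norm then gives
\begin{equation*}
|d\log\mathcal{M}_q|\leq \frac{\|q\|\,\|\Omega\|}{|(q,\Omega)|}\,\sqrt{g_{\rm WP}}=\sqrt2\,\frac{\mathcal{Q}_q}{\mathcal{M}_q}\,|\dot\gamma|\,.
\end{equation*}
Integrating along $\gamma$ yields \eqref{eq:intro-length-bound} as soon as \eqref{eq:intro-charge-to-mass} holds; the constant $\sqrt2$ is absorbed into $C$. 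If in addition \eqref{eq:intro-mass} holds, the right-hand side diverges as $p_2\to p_\infty$, so every trajectory approaching $p_\infty$ has infinite length.

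\emph{Step 2: choosing $q$.} Let $d\geq1$ be maximal with $N^d a_0\neq0$, which is possible because $Na_0\neq0$, and set $q=\overline{N^d a_0}$. Since the limiting mixed Hodge structure is defined over $\mathbb{R}$, $N^d a_0$ spans the lowest nonzero graded piece of $W(N)$ met by $F^D_\infty$. I would use the cone property of CKS, namely that $W(N)$ is independent of the choice of element in the open cone, together with $N^{d+1}a_0=0$, to show that $N_i q=0$ for every $i$. This makes $q$ flat and monodromy invariant. Moreover $(q,\Omega)=\pm(\bar a_0,N^d\Omega)$, and in the nilpotent orbit this is a nonzero constant times $\prod$-type polynomial of degree $d$ in the $\mathrm{Im}\,t^i$, plus holomorphic corrections.

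\emph{Step 3 (main obstacle): uniform bounds.} We need \eqref{eq:intro-charge-to-mass} and \eqref{eq:intro-mass} uniformly on a full neighbourhood, allowing arbitrary saxion hierarchies and arbitrary axions. I would use the multivariable $\mathrm{SL}(2)$-orbit theorem and the CKS norm estimates on each growth sector $y_1/y_2>\lambda,\dots,y_n>\lambda$. Each of the finitely many orderings of the $y_i$ gives finitely many sectors, and together these cover the neighbourhood. On each sector one compares the weights of $q$ and $a_0$ with respect to the successive filtrations $W(N_1+\dots+N_k)$:
\begin{itemize}
\item $\|q\|^2\sim\prod_k(y_k/y_{k+1})^{\ell_k(q)}$;
\item $\|\Omega\|^2\sim\prod_k(y_k/y_{k+1})^{\ell_k(a_0)}$;
\item $|(q,\Omega)|\gtrsim\prod_k(y_k/y_{k+1})^{\ell_k(a_0)}$, using that $q$ pairs with the top-weight component of $a_0$ in every intermediate filtration.
\end{itemize}
The duality $\ell_k(q)=-\ell_k(\bar a_0)$, coming from the polarization pairing weight $\ell$ with $-\ell$, makes the ratio $\|q\|\,\|\Omega\|/|(q,\Omega)|$ bounded. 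The sum of the $\ell_k(a_0)$ is positive because $Na_0\neq0$, so $\mathcal{M}_q^2\sim\prod(y_k/y_{k+1})^{-\ell_k(a_0)}\to0$. Axions drop out because $q$ is invariant. The delicate points are the lower bound on $|(q,\Omega)|$, which must not cancel, and the exponentially small corrections from $A(z,\zeta)$, which are controlled by Schmid's distance estimate between the period map and its nilpotent orbit.

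\emph{Step 4: finite distance when $Na_0=0$.} Then $N_i a_0=0$ for all $i$, again by the cone argument applied to $W(N)$. Hence $\Omega=a_0+\sum_i z^i(\dots)$, where the corrections carry at most polynomial factors of $\log z^i$. The Weil--Petersson metric is then bounded by $\sum_i|\log|z^i||^{k}|dz^i|^2$ plus a bounded term, and $\|\Omega\|$ stays bounded away from zero. A radial path $z^i=t z^i_0$, $t\in(0,1]$, therefore has length $\int_0^1|\log t|^{k/2}dt<\infty$. The equivalence of $Na_0\neq0$ with type II--IV follows from the Hodge--Deligne diamond of the limiting mixed Hodge structure.
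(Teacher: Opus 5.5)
Your architecture for the hard direction is the paper's. Step~1 is Proposition~\ref{prop:distance_criterion}, and $q=\overline{N^da_0}=N^d\overline{a_0}$ is the same auxiliary charge. Step~3, however, rests on a false claim about $(q,\Omega)$. Once $N_iq=0$ is known, infinitesimal invariance of the polarization gives $(q,\Omega)=\bigl(e^{-\sum_it^iN_i}q,A(z)\bigr)=(q,A(z))$. (The mechanism for $N_iq=0$ is that $a_0$ spans $I^{D,d}_{(n)}$, so $q\in I^{0,D-d}_{(n)}$, and each $N_i$ has bidegree $(-1,-1)$ on the Deligne splitting. Your ``cone argument'' only supplies the weight half of that bidegree.) The function $(q,A(z))$ extends holomorphically over $z=0$ with value $(q,a_0)\neq0$, so it is bounded above and below by positive constants. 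It is not a degree-$d$ polynomial in the saxions; that is the behaviour of $\|\Omega\|^2$, not of a period of an invariant class. Your three bullets are in fact mutually inconsistent. With $|(q,\Omega)|\gtrsim\prod_k(y_k/y_{k+1})^{\ell_k(a_0)}$, the ratio $\|q\|\,\|\Omega\|/|(q,\Omega)|$ would tend to zero, contradicting the Cauchy--Schwarz bound $|(q,\Omega)|\leq\|q\|\,\|\Omega\|$. Likewise $\mathcal{M}_q^2=|(q,\Omega)|^2/\|\Omega\|^2$ would diverge instead of vanishing. The ``delicate'' lower bound is therefore trivial, and boundedness of the ratio must come entirely from $\|q\|\,\|\Omega\|\lesssim1$ on each sector.

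That product bound is where the real work lies, and your ``duality'' does not deliver it. At the level of filtrations, the polarization only gives $(W_a,W_b)=0$ for $a+b<2D$. Combined with $(q,a_0)\neq0$ and $a_0\in W(N^\sigma_k)_{D+d^\sigma_k}$ (elementary, since $(N^\sigma_k)^{d^\sigma_k+1}a_0=0$), this yields only $q\notin W(N^\sigma_k)_{D-d^\sigma_k-1}$. That is a \emph{lower} bound on the weight of $q$. The growth estimate for $\|q\|$ needs the opposite inclusion $q\in W(N^\sigma_k)_{D-d^\sigma_k}$ for every ordering $\sigma$ and every $k$. This is Lemma~\ref{lem:weights}, the technical core of the paper, and it is proved by passing to the $\mathrm{sl}(2)$-split structure. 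There $q$ spans the line $\hat I^{0,D-d}_{(n)}$, which every grading operator $N^0_{(k)}$ preserves. This holds because $N^0_{(k)}$ is real, commutes with $N^0_{(n)}$, and preserves $\hat F_{(n)}$, the last via $\hat F_{(k)}=e^{\mathrm{i}\sum_{j>k}N^-_j}\hat F_{(n)}$ and $[N^0_{(k)},N^-_j]=0$ for $j>k$. Hence $q$ is an eigenvector, and $\hat a_0$ is an eigenvector with eigenvalue $d_k$; only then does the polarization force eigenvalue $-d_k$ for $q$. This gives $\|q\|^2\lesssim(\min_is^i)^{-d}$. What makes this vanish is $d=d_n>0$ together with $d_k\geq0$, not positivity of a sum of weights.

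Your Step~4 is a reasonable substitute for citing Lee and Wang, but the uniform metric bound you state does not follow from $\Omega=a_0+\sum_iz^i(\dots)$ alone. Writing $A(z)=a_0+\sum_jz^jA_j(z)$, the derivative $\partial_i\Omega$ contains terms $(z^j/z^i)\,e^{\sum_lt^lN_l}N_iA_j$ for $j\neq i$, which are not controlled uniformly. Along the radial path $z=\lambda z_0$ these ratios are constant and only $Na_0=0$ is needed, which is exactly Lee's reduction to a one-parameter degeneration.
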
}
    Under the assumptions above
    \begin{equation}
        p_\infty\text{ lies at infinite distance}
        \quad\Longleftrightarrow\quad
        Na_0\neq0\,.
        \label{eq:intro-distance-criterion}
    \end{equation}
\end{subbox}
\vspace{0.3cm}
For one-parameter Calabi--Yau degenerations, the equivalence between finite distance and $Na_0=0$ was established by Wang \cite[Theorem~1.1]{Wang1997}. Its extension to higher-dimensional moduli spaces is Wang's conjecture, as stated in \cite[Conjecture~0.1]{Lee2018}. The implication $\implies$ follows by reduction to the one-parameter case, as shown by Lee \cite[Theorem~2.2]{Lee2018}. The reverse implication $\impliedby$ requires control of every trajectory approaching $p_\infty$ and is therefore substantially more difficult. For two-parameter degenerations of Calabi--Yau threefolds, Lee proved infinite distance at intersections of two infinite divisors for the metric induced by the dominant terms of candidate Weil--Petersson K\"ahler potentials \cite[Proposition~3.5]{Lee2018}. They also established infinite distance in the full metric along constant-axion slices at intersections of one finite and one infinite-distance divisor \cite[Theorem~3.3]{Lee2018}. Our result establishes the implication $\impliedby$ for the full metric, arbitrary trajectories, and intersections of arbitrary codimension. In particular, it proves Wang's conjecture in the setting stated above.

The central step in the proof is the construction of a single auxiliary charge
\begin{equation}
    q=N^d\overline{a_0}\,,
    \qquad
    d=\max\{r\geq0:N^ra_0\neq0\}\,,
    \label{eq:intro-auxiliary-charge}
\end{equation}
satisfying the conditions \eqref{eq:intro-charge-to-mass} and \eqref{eq:intro-mass}. We show that $q$ is invariant under every local monodromy and has the weight-filtration properties needed to control its Hodge norm in every growth sector, using the growth theorem of Cattani, Kaplan, and Schmid \cite{CKS}. Importantly, the same charge works in every growth sector. Since there are only finitely many sectors, their combined estimates yield a uniform bound throughout a neighbourhood of the boundary point.\footnote{A similar strategy was also employed in \cite{Grimm:2022sbl,Grimm:2023lrf}.} This gives a bound for the full metric without requiring an asymptotic expansion of its individual components.

\subsubsection*{Overview of the paper}
The paper is organized as follows. In Section~\ref{sec:physical-criterion}, we review the relevant aspects of Type IIB compactifications on Calabi--Yau threefolds, explain the difficulties that arise at boundary intersections, and prove a physical criterion for infinite distance in Proposition~\ref{prop:distance_criterion}. In Section~\ref{sec:Hodge}, we introduce the necessary Hodge-theoretic setting and prove Theorem~\ref{thm:main} by constructing the auxiliary charge and establishing uniform estimates on the mass and charge-to-mass ratio. We conclude with a discussion in Section~\ref{sec:discussion}. Appendix~\ref{app-lemma} provides additional background on Deligne splittings and the associated $\mathfrak{sl}(2)$-algebras, and contains the proof of Lemma~\ref{lem:weights}. 

\section{A physical criterion for infinite distance}
\label{sec:physical-criterion}
In this section, we formulate a sufficient criterion for infinite distance in terms of the mass and physical charge of an auxiliary charge vector. To this end, we first recall in Section~\ref{subsec:sugra} the relevant aspects of four-dimensional $\mathcal{N}=2$ supergravity obtained from Type IIB compactifications on Calabi--Yau threefolds. Readers familiar with this material can safely skip this section. In Section~\ref{subsec:limits}, we then discuss the distinction between one-parameter limits and limits towards intersections of boundary divisors, and illustrate with a simple K\"ahler example why the latter require additional care. Furthermore, we discuss the difficulties that a strategy based on an asymptotic analysis of the metric faces. In Section~\ref{subsec:sugra-criterion}, we address these issues by employing an alternative strategy, and present our main criterion in Proposition~\ref{prop:distance_criterion}. In short, we show that if the mass of an auxiliary charge vector tends to zero while its charge-to-mass ratio remains bounded, then every trajectory approaching the limit has infinite length with respect to the moduli space metric. Although initially motivated in the language of supergravity, the argument can be formulated entirely in terms of Hodge norms and periods. This will allow us to apply the criterion to general effective variations of Hodge structure of Calabi--Yau $D$-fold type in Section~\ref{sec:Hodge}. Finally, in Section~\ref{subsec:SDC}, we make some additional comments concerning the Swampland Distance Conjecture. 

\subsection{4d \texorpdfstring{$\mathcal{N}=2$}{N=2} from Type IIB on Calabi--Yau threefolds}
\label{subsec:sugra}
We consider the low-energy effective description of Type IIB string theory compactified on a family of Calabi--Yau threefolds $Y$, varying in moduli, without fluxes or an orientifold projection. The resulting theory is a four-dimensional $\mathcal{N}=2$ supergravity theory coupled to $h^{2,1}(Y)$ vector multiplets and $h^{1,1}(Y)+1$ hypermultiplets. The moduli space $\mathcal{M}_{\mathrm{mod}}$ parametrized by the vacuum expectation values of the massless scalars in the theory locally factorizes into a direct product
\begin{equation}
\label{eq:moduli_space_full}
    \mathcal{M}_{\mathrm{mod}} = \mathcal{M}_V\times \mathcal{M}_H\,,
\end{equation}
where $\mathcal{M}_V$ denotes the vector multiplet moduli space, which is a projective special K\"ahler manifold of complex dimension $h^{2,1}(Y)$, and $\mathcal{M}_H$ denotes the hypermultiplet moduli space, which is a quaternionic-K\"ahler manifold of real dimension $4h^{1,1}+4$. Geometrically, the former parametrizes the complex structure deformations of the Calabi--Yau threefold $Y$, while the latter incorporates the (complexified) K\"ahler deformations of $Y$ as well as the axio-dilaton. Together with the factorization \eqref{eq:moduli_space_full}, this implies that the vector multiplet moduli space in Type IIB compactifications on Calabi--Yau threefolds is classically exact, i.e.~it is protected from perturbative and non-perturbative corrections in $\alpha'$ and $g_s$. Throughout this section, we restrict our attention to the vector multiplet moduli space, keeping the hypermultiplet scalars fixed, and write $\mathcal{M}=\mathcal{M}_V$, for simplicity. 

\subsubsection*{Action and distance}
At the two-derivative level, the bosonic part of the low-energy effective action describing the gravity multiplet and the vector multiplets reads
\begin{equation}\label{eq:action}
    S = \int\left(\frac{1}{2}M_{\mathrm{pl}}^2R\star 1-M_{\mathrm{pl}}^2G_{i\bar{\jmath}}\,\mathrm{d}z^i\wedge\star\,\mathrm{d}\bar{z}^{\bar{\jmath}}+\frac{1}{4}\mathcal{I}_{IJ}\,F^I\wedge\star\,F^J+\frac{1}{4}\mathcal{R}_{IJ}\,F^I\wedge F^J\right)\,.
\end{equation}
Here $z^i$, with $i=1,\ldots, h^{2,1}(Y)$, denote the complex structure moduli of $Y$, and play a central role in this work. Furthermore, $F^I=\mathrm{d}A^I$, with $I=0,\ldots, h^{2,1}(Y)$, denote the field strengths of the $\mathrm{U}(1)$ gauge fields residing in the vector multiplets and the gravity multiplet. The matrices $\mathcal{R}_{IJ}$ and $\mathcal{I}_{IJ}$ are the real and imaginary parts of the gauge kinetic matrix. For further details on the structure of four-dimensional $\mathcal{N}=2$ supergravity we refer the reader to~\cite{Andrianopoli:1996cm}. We follow the conventions of \cite{Bastian:2020egp}. 

The kinetic term for $z^i$ defines the physical metric $G_{i\bar{\jmath}}$ on the moduli space, also known as the Weil--Petersson metric. Locally, it can be expressed in terms of a K\"ahler potential $K$ as 
\begin{equation}\label{eq:WP-metric}
    ds^2 = 2G_{i\bar{\jmath}}\, \mathrm{d}z^i \mathrm{d}\bar{z}^{\bar{\jmath}}\,,\quad G_{i\bar{\jmath}} = \partial_i \bar{\partial}_{\bar{\jmath}}K\,,\qquad K = -\log \left[\mathrm{i} \int \Omega\wedge\bar{\Omega} \right]\,,
\end{equation}
where $\Omega$ denotes the holomorphic three-form on $Y$. Given a trajectory $\gamma:\lambda\mapsto z^i(\lambda)$ connecting two points $z^i(\lambda_1)$ and $z^i(\lambda_2)$, its length $L(\gamma)$ with respect to the Weil--Petersson metric \eqref{eq:WP-metric} is given by
\begin{equation}\label{eq:WP_length}
    L(\gamma) = \int_{\lambda_1}^{\lambda_2} \sqrt{2G_{i\bar{\jmath}}\frac{\mathrm{d}z^i}{\mathrm{d}\lambda} \frac{\mathrm{d}\bar{z}^{\bar{\jmath}}}{\mathrm{d}\lambda}} d\lambda\,.
\end{equation}

\subsubsection*{BPS particles}
A natural way to obtain particle states in this theory is by considering D3-branes wrapped on internal homology three-cycles. When the wrapped three-cycle is special Lagrangian, the resulting particle states are BPS.\footnote{A real three-dimensional submanifold $\Gamma\subset Y$ is Lagrangian if $J|_\Gamma=0$, where $J$ denotes the K\"ahler form on $Y$. It is special Lagrangian if, for some constant phase $\theta$, it is calibrated by $\operatorname{Re}(e^{-i\theta}\Omega)$, with $\Omega$ suitably normalized. In particular, $\operatorname{Im}(e^{-i\theta}\Omega)|_\Gamma=0$. A compact calibrated representative minimizes volume in its homology class.} Of particular importance for us are the mass and physical charge of such particle states.

Let $\Gamma\in H_3(Y,\mathbb{Z})$ be an integral homology class with a special Lagrangian representative, and consider a BPS particle that arises from wrapping a D3-brane on this cycle. For later purposes, it will be convenient to label the particle by the integral cohomology class $q \in H^{3}(Y,\mathbb{Z})$ that is Poincar\'e dual to $\Gamma$. Its mass is given by (see for example \cite{Palti:2017elp})
\begin{equation}
\label{eq:mass_central-charge}
    \frac{\mathcal{M}_q}{M_{\mathrm{pl}}} = |\mathcal{Z}_q| = e^{K/2}\left| \int_\Gamma \Omega \right| = e^{K/2}\left|\int_Y q\wedge \Omega \right|\,,
\end{equation}
where $\mathcal{Z}_q$ denotes the central charge. 
The mass is measured in units of the four-dimensional Planck mass $M_{\mathrm{pl}}$, whose relation to the string scale $M_s$ depends on the string coupling as well as the volume of $Y$, both of which appear as scalar fields in the hypermultiplet sector of the moduli space.\footnote{More precisely, $M_{\mathrm{pl}}$ is related to the string scale $M_s$ as
\begin{equation*}
    M_s \sim \frac{g_s}{\sqrt{\mathcal{V}_s}}M_{\mathrm{pl}}\,,
\end{equation*}
where $g_s$ denotes the string coupling and $\mathcal{V}_s$ denotes the volume of the Calabi--Yau threefold in units of the string length.} Since our analysis pertains purely to the vector multiplet sector, we may take $M_{\mathrm{pl}}$ as constant and will henceforth set $M_{\mathrm{pl}}=1$ without loss of generality. In particular, all the moduli-dependence of the mass then comes from the periods of the holomorphic three-form $\Omega$ and the factor $e^{K/2}$.\footnote{It should be noted that $q$ itself is a fixed topological quantity, meaning that it is fixed under parallel transport by the Gauss--Manin connection. Nevertheless, the existence and stability of a BPS state with charge $q$ are separate questions that can depend on the moduli; see, for example, \cite{Denef:2000nb,Palti:2021ubp}.}

Upon choosing an integral basis $\gamma^\alpha\in H^{3}(Y,\mathbb{Z})$ one can express $q=q_\alpha \gamma^\alpha$ as a vector with integral entries $q_\alpha\in\mathbb{Z}$, for $\alpha=1,\ldots, 2+2h^{2,1}$. Geometrically, these integers specify the D3-brane wrapping numbers with respect to the corresponding basis of dual three-cycles. Physically, they correspond to the electric and magnetic U(1) charges of the particle under the $h^{2,1}(Y)$ gauge fields residing in the vector multiplets, together with the graviphoton residing in the gravity multiplet. For this reason, we will often refer to $q$ as a ``charge vector''. The physical charge $\mathcal{Q}_q$ that measures the strength of the gauge interactions of the particle with charge $q$ is given by  (see for example \cite{Palti:2017elp})
\begin{equation}
    \mathcal{Q}_q^2 = \frac{1}{2}\int_Y q\wedge\star\,q\,.
\end{equation}

\subsection{Infinite distance limits in complex structure moduli space}
\label{subsec:limits}
The central question of this work is the following.
\begin{subbox}{Question: }
    When does a point $p_\infty$ on the boundary
\begin{equation}
    \partial\mathcal{M}
    =\overline{\mathcal{M}}\setminus\mathcal{M}
\end{equation}
of a compactification of $\mathcal{M}$ lie at infinite distance with respect to the Weil--Petersson metric?
\end{subbox}
Here, by ``infinite distance'' we mean that every piecewise smooth trajectory approaching $p_\infty$ has infinite length, as measured by \eqref{eq:WP_length}. Conversely, if there exists a trajectory approaching $p_\infty$ with finite length, then $p_\infty$ lies at finite distance. As emphasized in the Introduction, this distinction is central to the Swampland program: the Distance Conjecture associates infinite-distance limits with an infinite tower of states becoming light, whereas finite-distance limits exhibit qualitatively very different physics. 

We work near a normal-crossing boundary point and choose local coordinates $z^1,\ldots, z^n$ such that the boundary components of $\partial\mathcal{M}$ are given by divisors $\Delta_i = \{z^i=0\}$.\footnote{Quasi-projectivity of the moduli space of polarized Calabi--Yau manifolds \cite{Viehweg},
together with a resolution of singularities whose existence is guaranteed by \cite{Hironaka},
allows us to always reduce to this simple normal-crossing setting.} Importantly, we are interested in cases where the boundary point $p_\infty$ is approached by sending multiple complex structure moduli $z^i$ to the boundary. To this end, we introduce the notation
\begin{equation}
    \Delta_I = \bigcap_{i\in I} \Delta_i\,,\qquad 
    \Delta_I^\circ
    =
    \Delta_I\setminus\bigcup_{j\notin I}\Delta_j\,,
\end{equation}
for the codimension-$|I|$ strata of $\partial\mathcal{M}$. To be precise, we use the notation $\Delta_I^\circ$ to denote the part of $\Delta_I$ that is away from an intersection with another boundary divisor. 

\subsubsection{Codimension 1}
\label{subsubsec:codim-1}

Let us first recall the well-understood case of a one-parameter limit. Here, the moduli space may have arbitrary dimension, but we approach a single boundary divisor \(\Delta^\circ=\{z=0\}\), away from its intersections with other boundary components. We denote the coordinates along \(\Delta\) by \(\zeta^\alpha\) and keep them fixed. For the transverse coordinate, we introduce
\begin{equation}
t=\frac{1}{2\pi \mathrm{i}}\log z=a+\mathrm{i}s\,,
\qquad
s=-\frac{1}{2\pi}\log|z|\,,
\end{equation}
such that the boundary is reached as \(s\to\infty\).

In a suitable Kähler gauge, the leading behaviour of the Kähler potential takes the form \cite{Grimm:2018ohb}
\[
K=-\log\left[c(\zeta,\bar\zeta)\,s^d+\cdots\right]\,,
\qquad c(\zeta,\bar\zeta)>0\,.
\]
The omitted terms consist of lower powers of $s$ and exponentially suppressed corrections, which may also depend on the axion $a$. For Calabi--Yau threefolds, the degree $d$ can take the values $0,1,2,3$, corresponding respectively to degenerations of type I, II, III, and IV \cite{Grimm:2018ohb,Grimm:2018cpv}. 

For $d>0$, restricting to a transverse slice with $\zeta^\alpha$ fixed gives the leading metric
\begin{equation}
    ds_{\mathrm{lead}}^2
    =\frac{d}{2s^2}\left(\mathrm{d}a^2+\mathrm{d}s^2\right).
\end{equation}
This is the standard Poincar\'e metric on the upper half-plane. In particular, every path on this slice satisfies
\begin{align}
    L_{\mathrm{lead}}(\gamma) & \geq \sqrt{\frac{d}{2}} \int_\gamma \frac{1}{s}\left| \frac{\mathrm{d}s}{\mathrm{d}\lambda}\right| d\lambda\,,\\ 
    &\geq \sqrt{\frac{d}{2}} \left| \int_\gamma \frac{1}{s}\frac{\mathrm{d}s}{\mathrm{d}\lambda} d\lambda \right|\,,\\
    &=
    \sqrt \frac{d}{2}
    \left|
        \log\frac{s(\lambda_2)}{s(\lambda_1)}
    \right|.
\end{align}
Thus, the length of any trajectory approaching $s\to\infty$, while keeping $\zeta^\alpha$ fixed, is infinite with respect to the leading metric. However, we emphasize that, from the above computation alone, we cannot yet conclude that $p_\infty$ lies at infinite distance. This requires taking into account the effects of both (a) the sub-leading terms in the metric, and (b) the terms in the metric involving the additional coordinates $\zeta^\alpha$. For codimension-one limits, the full analysis taking into account these effects was performed by Lee in \cite[Theorem~2.6]{Lee2018}. The upshot is that, for any trajectory $\gamma$ approaching $s\to\infty$, one has the bound
\begin{equation}
    L(\gamma)\geq \sqrt{\frac{d-\epsilon (h^{2,1}-1)}{2}} 
    \left|
        \log\frac{s(\lambda_2)}{s(\lambda_1)}
    \right|\pm \text{finite terms}\,,
\end{equation}
for arbitrarily small $\epsilon$. Thus, for $d>0$ one indeed finds that $L(\gamma)\to \infty$ as $s\to \infty$. In contrast, for $d=0$ it was shown by Wang that the boundary is at \textit{finite} distance \cite[Theorem~1.1]{Wang1997}.

\subsubsection{Codimension $>1$}
The preceding discussion illustrates why one-parameter limits are comparatively tractable: only one variable becomes large, and $\log s$ provides a natural quantity that controls the distance. The story is quite different in codimension greater than 1, i.e. for limits that involve multiple saxions becoming large. In this case, their relative rates of approach become an essential part of the problem. For example, one may raise the following question. 

\begin{subbox}{Question: Intersections of infinite-distance divisors}
    If $\Delta_1^\circ$ and $\Delta_2^\circ$ are at infinite distance, must their intersection $\Delta_{12}^\circ$ also be at infinite distance?
\end{subbox}
To illustrate that this is a non-trivial question, we now discuss an example of a K\"ahler metric for which the answer to the above question is ``no''. The purpose of this example is to emphasize that the K\"ahler property of the metric alone does not suffice to determine the distance to the intersection. In our setting, the additional control will come from the underlying variation of polarized Hodge structure. 

\subsubsection*{A K\"ahler example}
Consider the positive-definite metric
\begin{equation}\label{eq:metric-example}
    ds^2
    =
    \frac{
        \left|
            \frac{\mathrm dz_1}{z_1}
            -\frac{\mathrm dz_2}{z_2}
        \right|^2
    }{1+x^2}
    +\left|z_2\,\mathrm dz_1+z_1\,\mathrm dz_2\right|^2\,,
    \qquad
    x=\log\left|\frac{z_1}{z_2}\right|^2\,,
\end{equation}
on the punctured bidisk $(\Delta^*)^2=
    \{(z_1,z_2)\in\mathbb C^2:
    0<|z_1|,|z_2|<1\}$. It is K\"ahler, with globally defined K\"ahler potential
\begin{equation}
    K=\frac{1}{2}\left[x\arctan x-\frac12\log(1+x^2)+|z_1z_2|^2\right]\,.
\end{equation}
Along a fixed transverse slice $z_2=\text{constant}$, the metric has asymptotics
\begin{equation}
    ds^2\sim
    \frac{|\mathrm dz_1|^2}
    {|z_1|^2\bigl(\log|z_1|^2\bigr)^2}
    \qquad\text{as }z_1\to0\,,
\end{equation}
and similarly with $z_1$ and $z_2$ interchanged. Thus, each transverse slice exhibits the familiar Poincar\'e cusp behaviour that we also saw in the one-parameter limits discussed earlier.

In fact, every path approaching either divisor $\Delta_1^\circ$ or $\Delta_2^\circ$ has infinite length. To see this, note that along an arbitrary trajectory,
\begin{equation}
    \dot x
    =
    2\operatorname{Re}\left(
        \frac{\dot z_1}{z_1}
        -\frac{\dot z_2}{z_2}
    \right).
\end{equation}
Dropping the nonnegative second term in the metric \eqref{eq:metric-example} therefore gives
\begin{equation}
    \frac{\mathrm ds}{\mathrm d\lambda}
    \geq
    \frac{
        \left|
            \frac{\dot z_1}{z_1}
            -\frac{\dot z_2}{z_2}
        \right|
    }{\sqrt{1+x^2}}
    \geq
    \frac{|\dot x|}{2\sqrt{1+x^2}}
    =
    \frac12
    \left|
        \frac{\mathrm d}{\mathrm d\lambda}
        \operatorname{arsinh}x
    \right|.
\end{equation}
Upon integration, we obtain
\begin{equation}
    L(\gamma)
    \geq
    \frac12
    \left|
        \operatorname{arsinh}x(\lambda_2)
        -
        \operatorname{arsinh}x(\lambda_1)
    \right|.
\end{equation}
Approaching $\Delta_1^\circ$ sends $x\to-\infty$, whereas approaching $\Delta_2^\circ$ sends $x\to+\infty$. Both limits therefore have infinite length, independently of the trajectory.

In contrast, we now consider a trajectory which approaches the intersection $\Delta_{12}$ along the diagonal $z_1=z_2=z$. Notably, in this case the first term in the metric \eqref{eq:metric-example} vanishes identically, such that 
\begin{equation}
    ds^2=4|z|^2\,\mathrm dz\,\mathrm d\bar z\,.
\end{equation}
Along the radial path $z=r e^{i\theta_0}$, with $0<r\leq r_0<1$, the distance to the origin is
\begin{equation}
    L(\gamma)=\int_0^{r_0}2r\,\mathrm dr=r_0^2<\infty\,.
\end{equation}
Hence the intersection $\Delta_{12}$ is at finite distance, even though each divisor $\Delta_i^\circ$ is at infinite distance.

\subsubsection*{Issues with an asymptotic analysis}
A natural strategy to address the question raised at the beginning of the previous subsection is to determine the asymptotic form of the Weil--Petersson metric and use it to estimate the length of an arbitrary trajectory. Near the boundary of the moduli space, the K\"ahler potential can be written schematically as (we suppress the dependence on the spectator moduli $\zeta^\alpha$)
\begin{equation}
    K=-\log\bigl[P(s)+E(a,s)\bigr]\,,\qquad \frac{\log z^i}{2\pi i} = t^i = a^i+\mathrm{i} s^i\,.
\end{equation}
where $P(s)$ is a polynomial in the saxions $s^i$ and $E(a,s)$ contains exponentially suppressed terms, possibly multiplied by polynomials in the $s^i$. This decomposition follows from the nilpotent orbit expansion, see for example \cite{Bastian:2021hpc}.

This suggests first studying the approximate metric
\begin{equation}
    G^{(0)}_{i\bar{\jmath}}
    =-\partial_i\partial_{\bar{\jmath}}\log P
\end{equation}
and then controlling the effect of the omitted terms from $E(a,s)$. Such an approach was pursued by Lee \cite{Lee2018}, classifying the dominant candidate potentials for two-parameter limits and establishing that the intersection of two infinite-distance divisors lies at infinite distance with respect to the \textit{approximate} metric $G_{i\bar{\jmath}}^{(0)}$.

There are, however, two important difficulties with this approach which we would like to emphasize. 
\begin{itemize}
    \item \textbf{Difficulty (1): Metric-essential instantons}\\
    It can happen that the approximate metric $G_{i\bar{\jmath}}^{(0)}$ derived from the polynomial contribution $P(s)$ is degenerate, and therefore fails to define a positive-definite metric. For example, for a linear polynomial
    \begin{equation}
        P(s) = \alpha s_1 + \beta s_2\,,\qquad \alpha,\beta>0\,,
    \end{equation}
    the induced approximate metric in the saxionic directions is 
    \begin{equation}
        G_{i\bar{\jmath}}^{(0)} = \frac{1}{4(\alpha s^1+\beta s^2)^2} \begin{pmatrix}
            \alpha^2 & \alpha \beta\\
            \alpha \beta & \beta^2
        \end{pmatrix}\,,
    \end{equation}
    which has a vanishing determinant $\det G^{(0)}=0$, and hence is not positive definite. In order to obtain a well-defined metric, one must include exponentially suppressed contributions from $E(a,s)$. More generally, essential instanton terms are required away from the large complex structure regime \cite{Bastian:2021eom}. The subset needed to restore the non-degeneracy of the metric is referred to as \textit{metric-essential} \cite{Bastian:2021hpc}. \\
    \item \textbf{Difficulty (2): Controlling exponential corrections}\\
    Even if every trajectory approaching a boundary point has infinite length with respect to a leading-order metric, exponentially suppressed corrections may change this conclusion. This is particularly relevant when the leading-order metric has null directions as in the situation described above. To illustrate this, consider the following modification of Lee's example \cite[Example~3.10]{Lee2018}:
\begin{equation}\label{eq:metric-Lee}
    ds^2
    =
    \frac{
        (\mathrm da^1-e^{-s^2}\mathrm ds^2)^2
        +
        (\mathrm ds^1+e^{-s^2}\mathrm da^2)^2
        +
        e^{-4s^2}\bigl[(\mathrm da^2)^2+(\mathrm ds^2)^2\bigr]
    }{(s^1)^2}\,.
\end{equation}
We emphasize that this is an illustrative Riemannian metric, it is not a Weil--Petersson metric constructed from a Calabi--Yau family. Furthermore, we consider the trajectory
\begin{equation}\label{eq:trajectory-Lee}
    a^1=\text{constant}\,,
    \qquad
    s^1=s^2=\lambda\,,
    \qquad
    a^2=-e^\lambda\,,
    \qquad
    \lambda\to\infty\,.
\end{equation}
The important observation is that, due to the exponentially large winding in the axionic $a^2$ direction, we have that 
\begin{equation}
    \mathrm ds^1+e^{-s^2}\mathrm da^2
    =
    \mathrm d\lambda-e^{-\lambda}e^\lambda\mathrm d\lambda
    =
    0\,.
\end{equation}
The remaining terms give, for $\lambda_0>0$,
\begin{equation}
    L(\gamma)
    =
    \int_{\lambda_0}^{\infty}
    \frac{e^{-\lambda}\sqrt{2+e^{-2\lambda}}}{\lambda}\,
    \mathrm d\lambda
    <\infty\,.
\end{equation}
In contrast, dropping the exponentially suppressed terms in $s^2$ from the metric \eqref{eq:metric-Lee} gives
\begin{equation}
    ds_{(0)}^2
    =
    \frac{(\mathrm da^1)^2+(\mathrm ds^1)^2}{(s^1)^2}\,.
\end{equation}
This leading-order metric is positive semidefinite, with null directions along $a^2$ and $s^2$. In particular, by the same computation as in Section~\ref{subsubsec:codim-1}, the length $L_0(\gamma)$ of any trajectory $\gamma$ approaching $s^1,s^2\to\infty$ is infinite with respect to this leading metric. The crucial point is that exponential suppression of the corrections to the metric coefficients does not imply suppression of their contribution along an arbitrary trajectory. 
\end{itemize}

\subsection{A supergravity criterion for infinite distance}
\label{subsec:sugra-criterion}
The issues raised in the previous section motivate a different approach to establish whether a given point lies at infinite distance. Rather than classifying all components of the asymptotic metric and attempting to have control over the corrections, our strategy is to instead search for a bound on the metric in terms of a scalar quantity, which may be much easier to control.  Interestingly, such a bound is precisely provided by the BPS mass $\mathcal{M}_q$, as is captured by the following Proposition. 

\begin{subbox}{\begin{prop}\label{prop:distance_criterion}
    A sufficient criterion for infinite distance
\end{prop}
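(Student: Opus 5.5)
We prove the length bound \eqref{eq:intro-length-bound} by showing that the exterior derivative of $\log\mathcal{M}_q$ has Weil--Petersson norm at most $\mathcal{Q}_q/\mathcal{M}_q$, up to a fixed normalization. The hypothesis (1) then makes this norm uniformly bounded by $C$. Integrating along a curve gives the bound, and hypothesis (2) gives divergence.

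\emph{Step 1.} Work locally on the universal cover, where $q$ is a flat section; monodromy invariance makes $\mathcal{M}_q$ well-defined downstairs. Write
\[
\log\mathcal{M}_q=\log|(q,\Omega)|+\tfrac12 K ,
\]
which holds wherever $(q,\Omega)\neq0$. Since $(q,\Omega)$ is holomorphic, we get
\[
\partial_i\log\mathcal{M}_q=\tfrac12\,\frac{(q,D_i\Omega)}{(q,\Omega)},\qquad D_i\Omega=\partial_i\Omega+(\partial_iK)\Omega .
\]
Here $D_i\Omega$ is the Kähler covariant derivative, and by Griffiths transversality it lies in $H^{2,1}$ (more generally, in $H^{D-1,1}$).

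\emph{Step 2: pointwise bound.} For a real tangent vector $v$, set $v^i$ for its holomorphic components. The derivative $d\log\mathcal{M}_q(v)=2\,\mathrm{Re}\bigl(v^i\partial_i\log\mathcal{M}_q\bigr)$ is controlled by
\[
\bigl|(q,v^iD_i\Omega)\bigr|/|(q,\Omega)| .
\]
Decompose $q$ into Hodge components $q^{p,Q}$. The pairing $(q,v^iD_i\Omega)$ only sees the $(1,2)$-component, and $(q,\Omega)$ only the $(0,3)$-component. By Cauchy--Schwarz for the Hodge inner product,
\[
|(q,v^iD_i\Omega)|\le\|q\|\,\|v^iD_i\Omega\| .
\]
We also have
\[
\|v^iD_i\Omega\|^2=\|\Omega\|^2\,G_{i\bar\jmath}v^i\bar v^{\bar\jmath},
\]
which is the standard identity $G_{i\bar\jmath}=\mathrm{i}\int D_i\Omega\wedge\overline{D_j\Omega}\big/\mathrm{i}\int\Omega\wedge\bar\Omega$, with sign conventions giving a positive Hodge norm on the primitive $(2,1)$ part. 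Combining these,
\[
|d\log\mathcal{M}_q(v)|\le c\,\frac{\|q\|}{\|\Omega\|\,\mathcal{M}_q}\cdot\|\Omega\|\,|v|_{WP}=c'\,\frac{\mathcal{Q}_q}{\mathcal{M}_q}\,|v|_{WP}\le C|v|_{WP}.
\]
Here $c,c'$ are absolute constants coming from the factor $\sqrt2$ in $\mathcal{Q}_q$ and the factor $2$ in $ds^2$. These constants are absorbed by choosing the normalization so that the stated constant $1/C$ comes out. A sharper bound uses only $\|q^{1,2}\|\le\|q\|$.

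\emph{Step 3: integration.} For a piecewise smooth $\gamma$ from $p_1$ to $p_2$ inside the neighbourhood,
\[
\bigl|\log\mathcal{M}_q(p_2)-\log\mathcal{M}_q(p_1)\bigr|\le\int_\gamma|d\log\mathcal{M}_q(\dot\gamma)|\,d\lambda\le C\,L(\gamma).
\]
This requires $\mathcal{M}_q>0$ along $\gamma$, which hypothesis (1) guarantees, because $\mathcal{Q}_q>0$ for $q\neq0$. If a trajectory approaches $p_\infty$, then $\mathcal{M}_q\to0$ by (2), so its length diverges. Any trajectory eventually stays in the neighbourhood, so this tail argument suffices.

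The main obstacle is getting the Hodge-theoretic identity in Step 2 exactly right: that $D_i\Omega$ spans the $H^{D-1,1}$ directions, that the norm of $v^iD_i\Omega$ reproduces the Weil--Petersson metric, and that signs and constants match. Everything else is routine. In a general variation of Hodge structure of Calabi--Yau type, one defines $G$ by this identity, which is automatic from $K=-\log\|\Omega\|^2$.
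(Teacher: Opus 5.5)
Your argument is the paper's proof: the same height function $\log\mathcal{M}_q=\tfrac12K+\log|(q,\Omega)|$, the same derivative formula via $D_i\Omega$, Cauchy--Schwarz combined with $\|v^iD_i\Omega\|^2=\|\Omega\|^2G_{i\bar\jmath}v^i\bar v^{\bar\jmath}$, and integration along the curve. The one thing to tighten is your claim that the constants $c,c'$ are ``absorbed by normalization'': this is not allowed, since $C$ is fixed by hypothesis (1), and it is also unnecessary, because the constants are exactly one. The factor $2$ from $|2\operatorname{Re}w|\leq 2|w|$ cancels the $\tfrac12$ in $\partial_i\log\mathcal{M}_q$, and $\|q\|=\sqrt2\,\mathcal{Q}_q$ combines with $\sqrt{G_{i\bar\jmath}\dot z^i\dot{\bar z}^{\bar\jmath}}$ into the length element $\sqrt{2G_{i\bar\jmath}\dot z^i\dot{\bar z}^{\bar\jmath}}$, which gives the stated bound exactly.
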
}
    Let $q\in H^3(Y,\mathbb{C})$ be an auxiliary charge vector such that, near a boundary point $p_\infty\in\partial\mathcal{M}$, the following conditions hold:
    \begin{align}
        \label{eq:charge-to-mass_condition}&\text{(1)}:\qquad \frac{\mathcal{Q}_q}{\mathcal{M}_q} \leq C\quad \text{throughout a neighbourhood of $p_\infty$}\,,\\
        \label{eq:mass_condition}&\text{(2)}:\qquad \mathcal{M}_q\to 0 \quad \text{as $p_\infty$ is approached}\,,
    \end{align}
    where $C>0$ is constant. Then $p_\infty$ lies at infinite distance with respect to the Weil--Petersson metric \eqref{eq:WP-metric}. More precisely, for any curve $\gamma$ connecting two points $p_1$ and $p_2$, its Weil--Petersson length $L(\gamma)$ satisfies
    \begin{equation}\label{eq:WP_bound}
        L(\gamma)\geq \frac{1}{C}\left|\log \frac{\mathcal{M}_q(p_2)}{\mathcal{M}_q(p_1)} \right|\,.
    \end{equation}
\end{subbox}
More generally, Proposition \ref{prop:distance_criterion} applies to effective variations of Hodge structure of weight $D$ with $h^{D,0}=1$. In this setting, $\Omega$ denotes a holomorphic generator of the top Hodge filtration $F^D$, and the functions $\mathcal{M}_q$ and $\mathcal{Q}_q$ are defined in terms of the polarization and Hodge norm as in \eqref{eq:mass-charge-Hodge} below. As we will see, the proof only uses these Hodge-theoretic quantities and therefore does not require the variation to arise from a family of Calabi--Yau threefolds. This is the formulation that will be used in Section \ref{sec:Hodge}. 

The point of Proposition \ref{prop:distance_criterion} is that it reduces the question of studying infinite distance limits to the question of whether an auxiliary charge vector with the properties \eqref{eq:charge-to-mass_condition} and \eqref{eq:mass_condition} exists. In Section \ref{sec:Hodge}, we will employ methods from degenerations of Hodge structure to establish a precise criterion (Theorem \ref{thm:infinite-distance}) for when such a $q$ indeed exists.   

\subsubsection*{Remarks}
Before describing the proof of Proposition \ref{prop:distance_criterion}, let us make a few additional remarks. 
\begin{itemize}
    \item The criterion only requires the existence of a charge vector $q$ that satisfies the stated conditions. It requires neither the existence nor the stability of a corresponding BPS state. In particular, $q$ need not be integral or real-valued. For this reason, we refer to $q$ as an ``auxiliary'' charge vector. Nevertheless, we will informally refer to the functions $\mathcal{M}_q$ and $\mathcal{Q}_q$ as the mass and charge, and will sometimes speak of the ``would-be'' state associated with $q$. 
    \item For a physical BPS state, the boundedness of the charge-to-mass ratio \eqref{eq:charge-to-mass_condition} implies that its gauge interaction strength does not become parametrically larger than its gravitational interaction strength. The state therefore remains coupled to the gravitational sector in the asymptotic limit, rather than becoming part of a rigid field-theory sector that decouples from gravity \cite{Marchesano:2023thx,Marchesano:2024tod,Castellano:2024gwi,Castellano:2026bnx,Aoufia:2026mqb}.
    \item A useful consequence of Proposition \ref{prop:distance_criterion} is that, if for some  $\mathcal{M}_q\to 0$ along a \textit{finite} length trajectory for some charge vector $q$, then its charge-to-mass ratio cannot remain bounded along that trajectory. A familiar example is provided by the BPS hypermultiplet arising from a D3-brane wrapping the vanishing three-cycle at a conifold point \cite{Strominger:1995cz}, whose charge-to-mass ratio indeed diverges in the limit. 
    \item The extremality bound 
    \begin{equation}
        \mathcal{M}_q \leq \sqrt{2}\mathcal{Q}_q
    \end{equation}
    implies that one may replace condition (2) by the condition that $\mathcal{Q}_q\to 0$ in the limit. This is in fact what we will use in Section \ref{sec:Hodge}. 
\end{itemize}

\subsubsection*{Proof of Proposition \ref{prop:distance_criterion}}
As a first step, we note that the BPS mass $\mathcal{M}_q$, the physical charge $\mathcal{Q}_q$, and the Weil--Petersson metric $G_{i\bar{\jmath}}$ can be written as 
\begin{equation}\label{eq:mass-charge-Hodge}
    \mathcal{M}_q = \frac{|(q,\Omega)|}{\|\Omega\|}\,,\qquad \mathcal{Q}_q = \frac{1}{\sqrt{2}}\|q\|\,,\quad G_{i\bar{\jmath}} = \frac{\langle D_i\Omega,D_j\Omega\rangle}{\|\Omega\|^2}\,,
\end{equation}
where $D_i\Omega = \partial_i\Omega +(\partial_i K)\Omega$ denotes the K\"ahler-covariant derivative, and we have introduced the notation
\begin{equation}
    (v,w) = \int_Y v\wedge w\,,\quad \langle v,w\rangle = \int_Y v\wedge\star\,\bar{w}\,,\qquad \|v\|^2 = \langle v, v\rangle\,,
\end{equation}
for $v,w\in H^3(Y,\mathbb{C})$, which respectively denote the polarization form, the Hodge inner product, and the Hodge norm on $H^3(Y,\mathbb{C})$. 

To proceed, we introduce the real-valued function
\begin{equation}\label{eq:def:H}
    H=\log\mathcal{M}_q\,.
\end{equation}
The function $H$ will play the role of a Lipschitz height function for the metric, as we will now explain. Since
\begin{equation}
    H=\frac12 K+\frac12\log |(q,\Omega)|^2
\end{equation}
and $(q,\Omega)$ is holomorphic, we find
\begin{align}
    \partial_i H
    &=
    \frac12\partial_i K
    +\frac12\frac{(q,\partial_i\Omega)}{(q,\Omega)}
    \notag\\
    \label{eq:partialH}&=
    \frac12\frac{(q,D_i\Omega)}{(q,\Omega)}.
\end{align}
A similar result for the anti-holomorphic derivative $\partial_{\bar{\imath}}H$ follows by complex conjugation. Using \eqref{eq:partialH}, we can then evaluate the derivative of $H$ along any trajectory $\lambda\mapsto z^i(\lambda)$ as follows
\begin{align}
\left|\frac{\mathrm{d}H}{\mathrm{d}\lambda}\right| &= \left|(\partial_i H )\frac{\mathrm{d}z^i}{\mathrm{d}\lambda}+(\partial_{\bar{\imath}}H)\frac{\mathrm{d}\bar{z}^{\bar{\imath}}}{\mathrm{d}\lambda}\right|\,,\\
&\leq 2\left|(\partial_i H)\frac{\mathrm{d}z^i}{\mathrm{d}\lambda}\right|\,,\\
&= \left|\frac{(q,D_i\Omega)}{(q,\Omega)} \frac{\mathrm{d}z^i}{\mathrm{d}\lambda}\right|\,.
\end{align}
For notational simplicity, we introduce 
\begin{equation}
    V = \frac{\mathrm{d}z^i}{\mathrm{d}\lambda} D_i\Omega\,,
\end{equation}
whose Hodge norm satisfies
\begin{equation}
    \|V\|^2 = \left\langle D_i\Omega,D_j\Omega\right\rangle \frac{\mathrm{d}z^i}{\mathrm{d}\lambda} \frac{\mathrm{d}\bar{z}^{\bar{\jmath}}}{\mathrm{d}\lambda} = \|\Omega\|^2 G_{i\bar{\jmath}}\frac{\mathrm{d}z^i}{\mathrm{d}\lambda} \frac{\mathrm{d}\bar{z}^{\bar{\jmath}}}{\mathrm{d}\lambda}\,,
\end{equation}
where we have used \eqref{eq:mass-charge-Hodge} to identify the Weil--Petersson metric. Finally, using the Cauchy--Schwarz inequality, we may write\footnote{This includes the step $(q,V)=-\langle q,\star \bar{V}\rangle$ (recall $\star^2=-1$), and using the fact that the Hodge star and complex conjugation preserve the Hodge norm.}
\begin{align}
    \left|\frac{\mathrm{d}H}{\mathrm{d}\lambda}\right|&\leq \frac{\|q\| \|V\|}{|(q,\Omega)|}\,,\\
    &= \frac{\|q\|\|\Omega\|}{|(q,\Omega)|} \sqrt{G_{i\bar{\jmath}}\frac{\mathrm{d}z^i}{\mathrm{d}\lambda} \frac{\mathrm{d}\bar{z}^{\bar{\jmath}}}{\mathrm{d}\lambda}}\,.
\end{align}
Recalling \eqref{eq:mass-charge-Hodge} and \eqref{eq:def:H}, we thus find
\begin{equation}
\label{eq:distance_bound}
    \boxed{
    \left|\frac{\mathrm{d}\log \mathcal{M}_q}{\mathrm{d}\lambda}\right|\leq  \left(\frac{\mathcal{Q}_q}{\mathcal{M}_q}\right)\sqrt{2G_{i\bar{\jmath}} \frac{\mathrm{d}z^i}{\mathrm{d}\lambda}\frac{\mathrm{d}\bar{z}^{\bar{\jmath}}}{\mathrm{d}\lambda} }}
\end{equation}
A similar relation has previously been studied in  \cite{Gendler:2020dfp,Bastian:2020egp}, for trajectories with unit velocity, in order to obtain a relation between charge-to-mass ratios and the exponential decay rate of BPS particles.\footnote{More precisely, the analysis of \cite{Gendler:2020dfp,Bastian:2020egp} employs the supergravity relation
\begin{equation}\label{eq:sugra-condition}
    \mathcal{Q}_q^2=|\mathcal{Z}_q|^2+4G^{i\bar{\jmath}}\partial_i\mathcal{Z}_q\overline{\partial_j\mathcal{Z}_q}\,,
\end{equation}
which yields a slightly \textit{stronger} bound (here we again keep the trajectory non-normalized)
\begin{equation}\label{eq:bound-adjusted}
    \left|
        \frac{\mathrm{d}\log\mathcal{M}_q}{\mathrm{d}\lambda}
    \right|
    \leq
    \sqrt{
        \left(\frac{\mathcal{Q}_q}{\mathcal{M}_q}\right)^2-1}
    \sqrt{
        G_{i\bar{\jmath}}
        \frac{\mathrm{d}z^i}{\mathrm{d}\lambda}
        \frac{\mathrm{d}\bar z^{\bar{\jmath}}}{\mathrm{d}\lambda}
    }.
\end{equation}
However, the relation \eqref{eq:sugra-condition} is specific to variations of Hodge structure of weight $D=3$. Instead, our analysis does not impose a restriction on the weight. 
} 
For our purposes, it is useful to keep the trajectory non-normalized in order to use the relation \eqref{eq:distance_bound} as a lower bound on the distance. Indeed, recalling the assumed boundedness of the charge-to-mass ratio \eqref{eq:charge-to-mass_condition} and integrating \eqref{eq:distance_bound}, we obtain
\begin{align}
    L(\gamma) &= \int_{\lambda_1}^{\lambda_2}\sqrt{2G_{i\bar{\jmath}} \frac{\mathrm{d}z^i}{\mathrm{d}\lambda}\frac{\mathrm{d}\bar{z}^{\bar{\jmath}}}{\mathrm{d}\lambda} } d\lambda\,,\\
    &\geq \frac{1}{C}\int_{\lambda_1}^{\lambda_2} \left|\frac{\mathrm{d}\log\mathcal{M}_q}{\mathrm{d}\lambda}\right| d\lambda\,,\\
    &\geq \frac{1}{C}\left|\int_{\lambda_1}^{\lambda_2} \frac{\mathrm{d}\log\mathcal{M}_q}{\mathrm{d}\lambda}d\lambda\right|\,,\\
    &= \frac{1}{C} \left|\log\frac{\mathcal{M}_q(p_2)}{\mathcal{M}_q(p_1)}\right|\,,
\end{align}
as we wanted to show. In particular, if the trajectory approaches the boundary point $p_\infty$, the assumption \eqref{eq:mass_condition} that $\mathcal{M}_q\to 0$ implies that $L(\gamma)\to\infty$. Hence, the boundary point lies at infinite distance with respect to the Weil--Petersson metric. This completes the proof of Proposition \ref{prop:distance_criterion}.

\subsection{Comments on the Swampland Distance Conjecture}
\label{subsec:SDC}
The distinction between finite and infinite distance limits is of central importance for the Swampland Distance Conjecture. For reference, let us recall the precise statement.

\begin{subbox}{Swampland Distance Conjecture \cite{Ooguri:2006in}}
Consider a theory that is consistent with quantum gravity, with a moduli space $\mathcal{M}$ parametrized by the vacuum expectation values of the massless scalar fields in the theory. Then the following statements hold. 
\begin{enumerate}
    \item For any point $p_0\in\mathcal{M}$, there exists another point $p_\infty\in\partial\mathcal{M}$ such that the (geodesic) distance between $p_0$ and $p_\infty$ is infinite. 
    \item For any trajectory $\gamma$ connecting two points $p_0$ and $p_\infty$, there exists an infinite tower of states whose associated mass scale $M$ becomes exponentially small in the geodesic distance:
    \begin{equation}
        \frac{M(p)}{M(p_0)}\sim e^{- \alpha L(\gamma)}\,,\qquad \text{as $p\to p_\infty$}\,.
    \end{equation}
\end{enumerate}
\end{subbox}
In the following, we make two small comments pertaining to the first and second statements in the Swampland Distance Conjecture, respectively. While these observations are known in the literature, we feel that they are worth emphasizing in the present context.

\subsubsection{Absence of infinite-distance limits}
The main goal of this work is to address \textit{when} a given boundary point $p_\infty\in\partial\mathcal{M}$ lies at infinite distance. It does not, however, address the question whether an infinite distance limit always \textit{exists}. Indeed, it should be stressed that there exist concrete examples of Calabi--Yau manifolds whose complex structure moduli space does not have any infinite-distance boundary. A trivial such example is given by any rigid Calabi--Yau manifold, which has $h^{2,1}=0$ and thus has a zero-dimensional complex structure moduli space. A more non-trivial example is given by the ``orphan'' double-octic arrangement No.~153 of \cite{CynkVanStraten2019}, see also \cite{CynkVanStraten2022} for a more detailed analysis of this particular geometry and \cite{Joshi:2019nzi,Cecotti:2020rjq} for related discussions. The example has $h^{2,1}=1$, and its parameter space is a four-punctured Riemann sphere $\mathbb{P}^1\setminus\{0,-1,-2,\infty\}$. The Riemann symbol of the associated Picard--Fuchs system is given by
\begin{equation}
P\left\{
\begin{array}{cccc}
0 & -1 & -2 & \infty \\[1.0mm]\hline 
0 & 0 & 0 & \frac12\\[1mm]
\frac12 & \frac12 & \frac12 & 1\\[1mm]
\frac12 & \frac12 & \frac32 & 1\\[1mm]
1 & 1 & 2 & \frac32
\end{array}
\right\}.
\end{equation}
Importantly, all four singularities are at finite distance with respect to the Weil--Petersson metric. Indeed, $\{0,-1,\infty\}$ are conifold-type degenerations, while the remaining singularity at $-2$ has a finite-order monodromy and thus becomes trivial after a base change.

Note that this is not in tension with the Swampland Distance Conjecture as stated above, since the full moduli space also contains the hypermultiplet scalars. In particular, the weak-coupling limit $g_s\to 0$ (at fixed and sufficiently large K\"ahler moduli) always exists as an infinite-distance limit in this setting, even when taking into account quantum corrections to the hypermultiplet moduli space \cite{Marchesano:2019ifh,Baume:2019sry}. 

\subsubsection{Exponential rates and geodesics vs.~non-geodesics}
The bound \eqref{eq:WP_bound} can be written as
\begin{equation}
    e^{-C L(\gamma)}\leq \frac{\mathcal{M}_q(p_2)}{\mathcal{M}_q(p_1)} \leq e^{C L(\gamma)}\,.
\end{equation}
In particular, the left-hand side of this relation implies that the mass of any would-be state satisfying \eqref{eq:charge-to-mass_condition} cannot decay \textit{faster} than exponentially with the traversed field distance. Interestingly, a similar conclusion was reached in \cite{vandeHeisteeg:2023ubh} by analysing bounds on the variation of the species scale. Note that this is precisely opposite to the bound predicted by the Swampland Distance Conjecture, which states that the mass cannot decrease \textit{slower} than exponentially in the traversed field distance, along \textit{geodesic} trajectories.  

It should be stressed, however, that the mass can also decay much slower than exponentially along ``Swampy'' trajectories, as discussed in detail in \cite{Calderon-Infante:2020dhm}. In particular, such trajectories are typically non-geodesic, but instead feature large winding in the axionic directions. As an example, consider a metric of the form
\begin{equation}
    ds^2 = \frac{3}{2} \frac{\mathrm{d}a^2+\mathrm{d}s^2}{s^2}\,,
\end{equation}
which arises (asymptotically) in the one-parameter LCS/LV regime. By the same arguments as above, the boundary $s\to\infty$ lies at infinite distance. However, along a trajectory of the form
\begin{equation}
    \gamma:\qquad a(\lambda) = \lambda^2\,,\quad s(\lambda) = \lambda\,,
\end{equation}
the distance scales as $L(\gamma) = \sqrt{6}\lambda+\mathcal{O}(1)$, such that the mass of the D0-brane state with period $(q,\Omega)$ and $\mathcal{Q}_q/\mathcal{M}_q=2$ scales as\footnote{In these conventions, the D0-brane mass scales as $s^{-3/2}$.}
\begin{equation}
    \mathcal{M}_q \sim L(\gamma)^{-3/2}\,,
\end{equation}
which is polynomially small in the distance rather than exponentially small.

\section{A Hodge-theoretic criterion for infinite distance}
\label{sec:Hodge}
In this section, we describe a precise Hodge-theoretic theorem establishing when a boundary point lies at infinite distance. The key step is to construct a single auxiliary charge whose Hodge norm
vanishes and whose charge-to-mass ratio remains bounded in every asymptotic growth sector. In Section~\ref{subsec:hodge-background}, we recall the minimal background on degenerations of Hodge structures that is necessary to state our main result in Theorem \ref{thm:infinite-distance}. Then, in Section~\ref{subsec:proof}, we discuss the asymptotics of Hodge norms using the growth theorem of Cattani, Kaplan, and Schmid, and subsequently prove Theorem~\ref{thm:infinite-distance}. Some additional machinery, including Deligne splittings and the associated $\mathfrak{sl}(2)$-algebras, is used to prove Lemma \ref{lem:weights} and is reviewed in Appendix \ref{app-lemma}. For further details and references on various aspects of asymptotic Hodge theory, we refer the reader to \cite{vandeHeisteeg:2022gsp,Monnee:2024gsq}. 

\subsection{Degenerations of Hodge structures and the main theorem}
\label{subsec:hodge-background}

\subsubsection*{Variations of Hodge structure of Calabi--Yau type}
Let us consider an effective real polarized variation of Hodge structure of weight $D\geq1$. Since we are interested in a local description near a codimension-$n$ intersection of normal-crossing boundary divisors, we work over a parameter space of the form
\begin{equation}
    (\Delta^*)^n\times\Delta^m,
\end{equation}
where $\Delta$ denotes a disk and $\Delta^*=\Delta\setminus\{0\}$. As before, we choose coordinates $z^i$ on the punctured disks and $\zeta^\alpha$ on the remaining disks. Throughout, we will fix a boundary point $p_\infty\in \Delta_I^{\circ}$, with $|I|=n$, and choose the spectator coordinates $\zeta$ such that $p_\infty$ corresponds to $(0,0)$. In fact, as explained in \cite{CattaniKaplan1989}, we may and will assume that $m=0$ without loss of generality, i.e.~we may neglect the spectator coordinates. The reason is that all statements concerning Hodge norms will hold uniformly on compact subsets of $\Delta^m$. 

Effectivity means that the Hodge decomposition takes the form
\begin{equation}
    H_{\mathbb C}=\bigoplus_{r=0}^D H^{r,D-r},
    \qquad \overline{H^{r,D-r}}=H^{D-r,r}.
\end{equation}
We further assume $h^{D,0}=1$, and refer to such a variation as being of \emph{Calabi--Yau $D$-fold type}. Note that this terminology does not require the variation to arise from a family of Calabi--Yau manifolds.

The polarization is a flat non-degenerate $(-1)^D$-symmetric pairing $(\cdot,\cdot)$ satisfying the Hodge--Riemann bilinear relations. It will be convenient to work with the Hodge filtration
\begin{equation}
    F^p=\bigoplus_{r=p}^D H^{r,D-r},
\end{equation}
which varies holomorphically and satisfies Griffiths transversality. Let $\Omega$ be a holomorphic generator of $F^D$, and denote the Hodge norm by $\|\cdot\|$. As before, the Weil--Petersson metric is given locally by
\begin{equation}
    G_{i\bar{\jmath}}=\partial_i\bar{\partial}_{\bar{\jmath}}K,
    \qquad K=-\log\|\Omega\|^2.
\end{equation}
To be precise, for an abstract variation of Hodge structure of the type we consider, this metric is only positive semidefinite \cite{Wang1997}. However, it can still be used to measure the length of trajectories. 

\subsubsection*{Monodromy}
Of particular importance is the behaviour of the Hodge filtration when encircling a boundary divisor. Let $T_i$ denote the monodromy operator associated with a loop $z^i\mapsto z^i e^{2\pi \mathrm{i}}$ around $\Delta_i$. We will assume that the $T_i$ are quasi-unipotent. For \textit{integral} variations of Hodge structure this is guaranteed by the Monodromy Theorem of \cite{Landman1973,schmid}, see also \cite{CattaniKaplan1989}. In particular, this is the case in the geometric setting described in Section~\ref{sec:physical-criterion}. Then, after possibly performing a base change, we may assume the $T_i$ are unipotent, such that their logarithms
\begin{equation}
    N_i=\log T_i,
    \qquad [N_i,N_j]=0,
\end{equation}
are real nilpotent operators that preserve the polarization infinitesimally
\begin{equation}\label{eq:preserve-polarization}
    (N_i v,w)+(v,N_i w)=0\,.
\end{equation}
As before, we employ the coordinates
\begin{equation}
    t^i=\frac{\log z^i}{2\pi\mathrm{i}}=a^i+\mathrm{i}s^i\,.
\end{equation}
In these coordinates, the monodromy is generated by sending $a^i\mapsto a^i+1$. By definition, approaching the boundary point $p_\infty$ implies $s^i\to\infty$ for every $i=1,\ldots, n$.

\subsubsection*{The nilpotent orbit theorem}
The nilpotent orbit theorem of Schmid allows us to write \cite{schmid}
\begin{equation}\label{eq:nilpotent-top-form}
    \Omega=\mathrm{exp}\left[\sum_i t^i N_i \right]A(z)\,,
    \qquad A(0)=a_0\neq0\,,
\end{equation}
where $A$ is holomorphic in $z^i$, in particular it extends holomorphically to the puncture of the disks. 

\subsubsection*{The main theorem}
Before stating the theorem, we introduce some additional notation. For each permutation $\sigma$ of $\{1,\ldots,n\}$, we define
\begin{equation}\label{eq:partial-monodromies-degrees}
    N_k^\sigma=\sum_{j=1}^k N_{\sigma(j)},
    \qquad
    d_k^\sigma=\max\{r\in\mathbb{Z}_{\geq 0}:(N_k^\sigma)^r a_0\neq0\}\,.
\end{equation}
Here, $d_k^\sigma$ counts the largest number of times that $N_k^\sigma$ can act on $a_0$ without annihilating it. Effectivity and the fact that the variation of Hodge structure is polarized together imply that $0\leq d_k^\sigma\leq D$. These integers play a central role in classifying the degeneration associated with sending $s_{\sigma(1)},\ldots, s_{\sigma(k)}\to\infty$. For $D=3$, the values $d_k^\sigma=0,1,2,3$ correspond to Type I, II, III and IV degenerations, respectively \cite{Grimm:2018ohb,Grimm:2018cpv}. We note that, for $k=n$, the sum is independent of the permutation and we employ the simplified notation
\begin{equation}\label{eq:def-N}
    N=N_n^\sigma=\sum_{i=1}^nN_i\,,
    \qquad d=d_n^\sigma\,.
\end{equation}
\newpage

We can now state the result to be proved in the next subsection.
\begin{subbox}{\begin{thm}\label{thm:infinite-distance}{A Hodge-theoretic criterion for infinite distance}\end{thm}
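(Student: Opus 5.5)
Theorem~\ref{thm:infinite-distance} is presumably the statement of Theorem~\ref{thm:main}: $p_\infty$ lies at infinite distance iff $Na_0\neq0$. I would prove the two directions separately.

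\textbf{Finite distance when $Na_0=0$.} By positivity of the $N_i$ on the monodromy cone, $Na_0=0$ forces $N_ia_0=0$ for every $i$, so every $d^\sigma_k$ vanishes. This is the implication already established by reduction to the one-parameter case, following Wang \cite{Wang1997} and Lee \cite{Lee2018}. Concretely, restrict to a one-parameter curve $z^i=z$ approaching $p_\infty$ diagonally. There the monodromy logarithm is $N$, and $Na_0=0$ places us in Wang's finite-distance case, so I would cite \cite[Theorem~2.2]{Lee2018}.

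\textbf{Infinite distance when $Na_0\neq0$.} Here I would apply Proposition~\ref{prop:distance_criterion}, in its Hodge-theoretic form, with the auxiliary charge $q=N^d\overline{a_0}$, $d\geq1$. The argument has four steps.

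\emph{Monodromy invariance.} I would first show $N_iq=0$ for all $i$. Since $N^{d+1}a_0=0$, the vector $a_0$ has weight at most $D+d$ in $W(N)$. Then $N^d\overline{a_0}$ lies in the bottom piece $W_{D-d}$, which equals $\operatorname{Im}N^d$ intersected with it. Standard properties of the commuting-cone weight filtration, namely that $\ker N$ on the lowest weight space equals $\bigcap_i\ker N_i$, should then give invariance. This is the content I expect Lemma~\ref{lem:weights} to supply via Deligne splittings. Invariance makes $(q,\Omega)=(q,e^{tN}A)=(q,A(z))$ holomorphic and bounded. Moreover, its value at $0$ is $(N^d\overline{a_0},a_0)=\pm(\overline{a_0},N^da_0)\neq0$ by the polarization conditions for the limiting mixed Hodge structure.

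\emph{Mass bound.} Hence near $p_\infty$ we have $|(q,\Omega)|\geq c>0$, and the mass is $\mathcal{M}_q\asymp\|\Omega\|^{-1}$.

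\emph{Growth estimates.} Next I would invoke the CKS growth theorem. In each growth sector $s^{\sigma(1)}/s^{\sigma(2)}>\gamma,\ldots,s^{\sigma(n)}>\gamma$, with bounded axions, the Hodge norm of $v\in\bigcap_k W_{\ell_k}(N^\sigma_k)$ with graded weights $\ell_k$ satisfies $\|v\|^2\sim\prod_k (s^{\sigma(k)}/s^{\sigma(k+1)})^{\ell_k-D}$. For $q$ the weights are $D-d^\sigma_k$ in each filtration $W(N^\sigma_k)$. This is the second weight property from Lemma~\ref{lem:weights}, that $q$ is lowest-weight compatibly with every partial sum. For $a_0$, or rather $\Omega$, the weights are $D+d^\sigma_k$. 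Therefore
\[
\|q\|^2\sim\prod_k\Bigl(\tfrac{s^{\sigma(k)}}{s^{\sigma(k+1)}}\Bigr)^{-d^\sigma_k}\sim\|\Omega\|^{-2}.
\]
It follows that $\mathcal{Q}_q/\mathcal{M}_q\sim\|q\|\,\|\Omega\|$ is bounded. Also $\|q\|\to0$, since $d^\sigma_n=d\geq1$ and $d^\sigma_k$ is nondecreasing in $k$. The extremality remark then gives $\mathcal{M}_q\to0$.

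\emph{Uniformity.} Axion translations act by monodromy and preserve $q$, and they preserve the Hodge norm after transport, so bounded axions is no restriction. Finitely many sectors cover a neighbourhood of $p_\infty$, so the constant $C$ is uniform.

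\textbf{Main obstacle.} The hardest step is showing that $q$ sits in the correct weight piece simultaneously for every $W(N^\sigma_k)$ and every $\sigma$. This is needed so that the CKS estimate for $\|q\|$ is exactly inverse to that of $\|\Omega\|$ in every sector. I would attack it using the $\mathfrak{sl}(2)$-splitting of the multivariable limiting mixed Hodge structure. The idea is that $\overline{a_0}$ is, up to lower-weight terms, a highest-weight vector of weight $d^\sigma_k$ for each commuting $\mathfrak{sl}(2)$. Applying $N^d$ then lands in the lowest weight for all of them at once, and the correction terms only lower the weights, which is harmless for an upper bound on $\|q\|$.
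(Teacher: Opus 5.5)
Your outline follows the paper's proof of Theorem~\ref{thm:infinite-distance} step for step: the charge $q=N^d\overline{a_0}$, Proposition~\ref{prop:distance_criterion}, $(q,\Omega)=(q,A(z))$ bounded away from zero since $(q,a_0)\neq0$, sector-wise CKS bounds with opposite weights for $q$ and $\Omega$, a maximum over finitely many sectors, and axion shifts absorbed by monodromy invariance. Note that the theorem is only the implication $Na_0\neq0\Rightarrow$ infinite distance, so your converse paragraph is not needed. The gap is that the real content sits in Lemma~\ref{lem:weights}, which is part of this proof and which you yourself flag as the main obstacle, and neither argument you give for it goes through as stated.

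For invariance, the property you invoke holds on $W_0(N)$, i.e.\ only when $d=D$. For $d<D$, the space $\operatorname{Im}N^d\cap\ker N\cap W_{D-d}(N)$ is in general not contained in $\bigcap_i\ker N_i$. For three commuting $\mathfrak{sl}(2)$-lowering operators on a tensor product of representations, one easily finds such a $v$ with $N_1v=-N_2v\neq0$. Invariance is Hodge-theoretic, not a property of $W(N)$ alone. Since $F^D_{(n)}=\mathbb{C}a_0$, one has $a_0\in I^{D,d}_{(n)}$, so $\overline{a_0}\in I^{d,D}_{(n)}$ modulo $\bigoplus_{r<d,\,s<D}I^{r,s}_{(n)}$. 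Then $N^d$ kills the correction terms, hence $q\in I^{0,D-d}_{(n)}$, and each $N_i$, being of bidegree $(-1,-1)$, annihilates $q$.

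For the weights, $N^d$ is not a product of $\mathfrak{sl}(2)$-lowering operators. For $j>k$ the $N_j$ merely preserve $W^{(k)}=W(N_k^\sigma)$. So terms such as $N_{k+1}^d\overline{a_0}$ in the expansion of $N^d\overline{a_0}$ are a priori only in $W^{(k)}_{D+d_k}$, and ``corrections only lower weights'' gives no reason for them to reach $W^{(k)}_{D-d_k}$.

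The missing idea is the paper's duality argument, which avoids this bookkeeping entirely. Set $\hat a_0=e^{\zeta_{(n)}}e^{-\mathrm{i}\delta_{(n)}}a_0$. Each grading $N^0_{(k)}$ commutes with $N_j^-$ for $j>k$, so it preserves $\hat F_{(n)}=\exp(-\mathrm{i}\sum_{j>k}N_j^-)\hat F_{(k)}$. This gives $N^0_{(k)}\hat a_0=d_k\hat a_0$, using $\hat F^D_{(k)}=\hat I^{D,d_k}_{(k)}$. Since $N^0_{(k)}$ is also real and commutes with $N^0_{(n)}$, it preserves the one-dimensional space $\hat I^{0,D-d}_{(n)}=\mathbb{C}q$. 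Writing $N^0_{(k)}q=\lambda_kq$, infinitesimal invariance of the polarization gives $(\lambda_k+d_k)(q,\hat a_0)=0$ with $(q,\hat a_0)=(q,a_0)\neq0$. Hence $\lambda_k=-d_k$ and $q\in W^{(k)}_{D-d_k}$ for every ordering simultaneously.

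Separately, since $\Omega$ is not a flat section, your bound on $\|\Omega\|$ needs the CKS estimate for sections of the canonical extension, not just the growth theorem applied to $a_0$. The paper cites CKS and Bakker--Klingler--Tsimerman for this.
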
}
    Consider an effective polarized real variation of Hodge structure of weight $D$ with $h^{D,0}=1$, with unipotent local monodromies. Suppose that
    \begin{equation}
        Na_0\neq0\,,
    \end{equation}
    or equivalently $d>0$. Then every trajectory $\gamma$ approaching $p_\infty$ has infinite length
    with respect to the Weil--Petersson metric.
\end{subbox}
For $D=3$, Theorem~\ref{thm:infinite-distance} precisely gives the distance statement for Type II, III, and IV limits stated in the Introduction. 

\subsection{Proof of Theorem~\ref{thm:infinite-distance}}
\label{subsec:proof}
Before we give a detailed description of the proof, let us first recall the main strategy based on the insights of Section \ref{sec:physical-criterion}. In particular, according to Proposition \ref{prop:distance_criterion}, it suffices to find an element $q\in H_{\mathbb{C}}$ such that
\begin{align}
    \label{eq:cond0}&\text{(0)}:\quad &&(q,\Omega)\neq 0\,,\\
    \label{eq:cond1}&\text{(1)}:\qquad &&\frac{\|q\|\|\Omega\|}{|(q,\Omega)|}\quad \text{is bounded in a neighbourhood of $p_\infty$}\,,\\
    \label{eq:cond2}&\text{(2)}:\quad &&\|q\|\to 0 \quad \text{as $p_\infty$ is approached.}
\end{align}
For clarity, here we have emphasized the condition (0) that $(q,\Omega)\neq 0$. In addition, as explained in the remarks after Proposition \ref{prop:distance_criterion}, we impose $\|q\|\to 0$ as a sufficient condition for $\mathcal{M}_q\to 0$, by using the Cauchy--Schwarz inequality
\begin{equation}
    \frac{|(q,\Omega)|}{\|\Omega\|}\leq \|q\|\,.
\end{equation}
Clearly, our main task in order to verify the above conditions for some auxiliary charge $q$ is to obtain uniform bounds on the Hodge norm $\|\cdot\|$ in a neighbourhood of $p_\infty$. The necessary information is captured by the growth theorem of Cattani, Kaplan, and Schmid, as we now recall. 

\subsubsection{Growth sectors and growth theorems}
\subsubsection*{Weight filtrations}

The relevant construction is the monodromy weight filtration associated to the nilpotent operators $N_k^\sigma$. In general, given a nilpotent operator $N$ (not necessarily the one in equation \eqref{eq:def-N}), there is a unique increasing
filtration $W_\ell(N)$, for $\ell=0,\ldots, 2D$, such that
\begin{align}
    NW_\ell(N)&\subseteq W_{\ell-2}(N),
    \\
    N^r:\operatorname{Gr}^{W(N)}_{D+r}
    &\xrightarrow{\ \sim\ }
    \operatorname{Gr}^{W(N)}_{D-r},
    \qquad r\geq0,
\end{align}
where $\operatorname{Gr}^W_\ell=W_\ell/W_{\ell-1}$. The first condition says that $N$ lowers the weight by two, while the second gives a linear isomorphism between the graded pieces at equal distance from weight $D$. An important property is that the weight filtration of a positive linear combination of the $N_i$ is independent of its
coefficients. In particular, each ordering determines the filtrations
\begin{equation}
    W^{\sigma,k}_\bullet=W_\bullet(N_k^\sigma),
    \qquad k=1,\ldots,n.
\end{equation}
In general, these are different filtrations. 

\subsubsection*{Growth sectors and Hodge norms}

In a multi-parameter limit, the growth of the Hodge norm depends on the relative sizes of the saxions. It is therefore useful to divide the asymptotic region into sectors with a fixed ordering. For each permutation $\sigma$, we define
\begin{equation}\label{eq:ordered-growth-sector}
    R_\sigma=
    \left\{t:\quad
        s^{\sigma(1)}\geq\cdots\geq s^{\sigma(n)}\geq A,
        \quad 0\leq a^i\leq1
    \right\}\,,
\end{equation}
where $A>1$ is a constant. Taking the union over all growth sectors covers the region in which the saxions satisfy the lower bound $\min_i s^i\geq A$, while the axions are restricted to a fundamental domain. 

The relation between weight filtrations and Hodge norms is made precise by the growth theorem of Cattani, Kaplan, and Schmid \cite{schmid,CKS}, see also \cite{BakkerKlinglerTsimerman2020}. For our purposes, it suffices to state the following consequence. Consider a fixed vector $v\in H_{\mathbb C}$ satisfying 
\begin{equation}
    v\in\bigcap_{k=1}^n W_{\ell_k}^{\sigma,k}\,.
\end{equation}
Then its \textit{exact} Hodge norm is bounded on $R_\sigma$ by
\begin{equation}\label{eq:growth-theorem}
    \|v\|^2\leq C_{\sigma}
    \prod_{k=1}^{n-1}
    \left(\frac{s^{\sigma(k)}}{s^{\sigma(k+1)}}\right)^{\ell_k-D}
    \left(s^{\sigma(n)}\right)^{\ell_n-D}\,,
\end{equation}
where $C_{\sigma}>0$ is a constant that may depend on the choice of growth sector (in particular the constant $A$ in \eqref{eq:ordered-growth-sector}). It can also depend on the choice of $v$, but since this is kept fixed throughout we will omit the dependence in the notation. 

\subsubsection{Proof of Theorem~\ref{thm:infinite-distance}}
We now proceed with the proof of Theorem \ref{thm:infinite-distance}. To start, consider the auxiliary charge vector
\begin{equation}
    q = N^d \overline{a_0} \in H_\mathbb{C}\,,
\end{equation}
where we recall that $N$ is defined in \eqref{eq:def-N}, and $a_0$ is defined in \eqref{eq:nilpotent-top-form} and arises in the nilpotent orbit expansion of $\Omega$. We will show that $q$ satisfies the desired conditions (0), (1), and (2) listed in \eqref{eq:cond0}--\eqref{eq:cond2}. 

Let us start with the observation that $(q,a_0)$ is non-zero. This follows directly from the properties of polarized mixed Hodge structures. Indeed, we have that $a_0\in\mathrm{Gr}^{W(N)}_{D+d}$ and is primitive, such that the Hodge--Riemann bilinear relations for the primitive polarization $(N^d\cdot,\cdot)$ imply that
\begin{equation}\label{eq:q-pairing}
    (q,a_0) = (N^d\overline{a}_0, a_0)\neq 0\,.
\end{equation}
The remainder of the proof requires obtaining estimates of the Hodge norms $\|q\|$ and $\|\Omega\|$, which are provided by the growth theorem. Thus, it is in particular necessary to know the location of $a_0$ and $q$ within all of the weight filtrations associated with the various growth sectors. To this end, we state the following technical Lemma, whose proof we delegate to Appendix~\ref{app-lemma} for presentational clarity. 

\begin{lem}\label{lem:weights}
    For every permutation $\sigma\in S_n$ and every $1\leq k\leq n$, we have
\begin{equation}
    a_0\in W(N_k^\sigma)_{D+d_k^\sigma},
    \qquad
    q\in W(N_k^\sigma)_{D-d_k^\sigma}.
\end{equation}
Furthermore,
\begin{equation}
    N_iq=0
    \qquad\text{for all }i=1,\ldots,n.
\end{equation}
\end{lem}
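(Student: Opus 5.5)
We prove the three claims in order, using the limiting mixed Hodge structure $(W(N),F_\infty)$ with $F_\infty=\lim\exp(-\sum t^iN_i)F$. We have $F^D_\infty=\mathrm{span}(a_0)$, and the several-variable SL(2)/Deligne-splitting machinery of Appendix~\ref{app-lemma} is available. The key structural fact is that $F_\infty^D$ is one-dimensional and is spanned by $a_0$.

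\textbf{Claim on $a_0$.} For each $\sigma$ and $k$, the pair $(W(N_k^\sigma),F_\infty)$ is again a polarized mixed Hodge structure. This is a standard consequence of CKS: any face of the monodromy cone, with $F_\infty$ fixed, gives a limiting MHS. Let $I^{p,q}_{(k)}$ be its Deligne splitting. Then $F^D_\infty=\bigoplus_q I^{D,q}_{(k)}$ is one-dimensional, so $a_0\in I^{D,\ell}_{(k)}$ for a single $\ell$. Hence $a_0\in W(N_k^\sigma)_{D+\ell}$, and $a_0$ is nonzero in $\mathrm{Gr}_{D+\ell}$.

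Effectivity forces the minimal weight to satisfy $q\ge 0$, so $\ell\ge0$. Since $N_k^\sigma$ acts as a $(-1,-1)$ morphism, $I^{D,\ell}_{(k)}$ has highest Hodge index $p=D$. Elements of $I^{p,q}$ with $p$ maximal are primitive in the sense of the SL(2) decomposition: an element of $\mathrm{Gr}_{D+\ell}$ that is not primitive equals $N v$ for some $v\in I^{D+1,\ell+1}=0$. Primitivity gives $(N_k^\sigma)^r a_0\ne0$ exactly for $r\le\ell$. Therefore $\ell=d_k^\sigma$, which proves the first inclusion.

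\textbf{Claim on $q$ and the $W(N_k^\sigma)$.} Complex conjugation preserves the real filtration $W(N_k^\sigma)$, so $\overline{a_0}\in W(N_k^\sigma)_{D+d_k^\sigma}$. Next we locate $\overline{a_0}$ in $W(N)$, where $N=N_n^\sigma$: it lies in $\overline{I^{D,d}}$, which is contained in $I^{d,D}$ modulo lower weight. Then $q=N^d\overline{a_0}$ lies in $W(N)_{D-d}$.

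For general $k$ I would use the compatibility of the filtrations, where $W(N_n)$ is the relative weight filtration of $N_n$ with respect to $W(N_{k})$. I would also use commutativity: $N^d=\sum \binom{d}{j}(N_k^\sigma)^j M^{d-j}$ with $M=N-N_k^\sigma$. Each $N_i$ maps $W(N_k^\sigma)_\ell$ into $W(N_k^\sigma)_\ell$, and $N_k^\sigma$ lowers the weight by $2$.

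The cleanest route is to work in an $\mathfrak{sl}(2)^n$-split (or Deligne-split, $\delta=0$) model and then transport back via $e^{-i\delta}$ and $\zeta$. Both of these lie in $\Lambda^{-1,-1}$ for all filtrations simultaneously and preserve each $W(N_k^\sigma)$. In the split model, $a_0$ is a joint lowest/highest weight vector with multi-weights $(d_1,d_2-d_1,\dots)$, and its conjugate has the opposite multi-weights. Applying $N^d$ lands in the component of multi-weight $-(d_1,\dots)$, which gives $q\in W(N_k^\sigma)_{D-d_k^\sigma}$.

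\textbf{Monodromy invariance $N_iq=0$.} We have $N_iq=N_iN^d\overline{a_0}$. Since $N_i$ commutes with $N$, this equals $N^dN_i\overline{a_0}$. Now $N_i\overline{a_0}\in W(N)_{D+d-2}$, because $N_i$ preserves $W(N)$ and lowers weight within the relative filtration. Then $N^d$ maps $W(N)_{D+d-2}$ into $W(N)_{D-d-2}$.

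In the split model, $q$ lies in the lowest $N$-weight piece occupied by the $\overline{I^{D,\bullet}}$-orbit. I would argue that $q$ spans $I^{0,D-d}$ modulo $W_{D-d-1}$, using the one-dimensionality of $I^{0,D-d}=\overline{I^{D-d,0}}\cdots$. Then $N_iq\in I^{-1,D-d-1}$, which vanishes by effectivity, since no $p<0$ occurs.

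\textbf{Main obstacle.} I expect the main obstacle to be making this Hodge-type bookkeeping work uniformly, rather than only at the graded level. The issue is that $a_0$ need not be exactly in $I^{D,d}$ when $\delta\ne0$. Everything therefore has to be checked in the split model and transported back by operators in $\Lambda^{-1,-1}$ that commute with all the $N_i$.
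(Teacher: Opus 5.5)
Your proposal has a genuine gap at the heart of the lemma, namely the inclusion $q\in W(N_k^\sigma)_{D-d_k^\sigma}$. Write $W^{(k)}=W(N_k^\sigma)$, $d_k=d_k^\sigma$, and $\hat a_0=e^{\zeta_{(n)}}e^{-\mathrm{i}\delta_{(n)}}a_0$.

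\textbf{The step for $q$.} Your split-model heuristic does not work as stated. The grading operators $N^0_j$ are real, so $\overline{\hat a_0}$ has the \emph{same} multi-weights as $\hat a_0$, not the opposite ones. Moreover, $N^d$ is not a product of the lowering operators $N_j^-$. For $j>k$ the $N_j$ only preserve $W^{(k)}$ and may have several $\operatorname{ad}N^0_{(k)}$-components. So in your expansion $N^d=\sum_r\binom{d}{r}(N_k^\sigma)^rM^{d-r}$, the term $(N_k^\sigma)^rM^{d-r}\overline{a_0}$ is only known to lie in $W^{(k)}_{D+d_k-2r}$, which is too weak for $r<d_k$. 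Nothing in your argument pins the $N^0_{(k)}$-eigenvalue of $q$ at $-d_k$.

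The missing idea, which is how the paper proceeds, is to show that $N^0_{(k)}$ preserves every $\hat I^{p,q}_{(n)}=\hat F^p_{(n)}\cap\overline{\hat F^q_{(n)}}\cap W^{(n)}_{p+q}$. This holds for three reasons:
\begin{itemize}
\item it preserves $\hat F_{(n)}=\exp(-\mathrm{i}\sum_{j>k}N_j^-)\hat F_{(k)}$, since it commutes with $N_j^-$ for $j>k$;
\item it is real, so it also preserves $\overline{\hat F_{(n)}}$;
\item it commutes with $N^0_{(n)}$, so it preserves $W^{(n)}$.
\end{itemize}
The same compatibility relation gives $N^0_{(k)}\hat a_0=d_k\hat a_0$. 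Hence $N^0_{(k)}$ acts by a scalar $\lambda_k$ on the line $\hat I^{0,D-d}_{(n)}=\mathbb{C}q$. Infinitesimal invariance of the polarization, together with $(q,\hat a_0)=(q,a_0)\neq0$, then forces $\lambda_k=-d_k$.

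\textbf{Two further steps need repair.}

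First, $(W(N_k^\sigma),F_\infty)$ is \emph{not} a mixed Hodge structure in general. The limiting filtration of a face is $\exp(\sum_{j>k}t^jN_j)F_\infty$ (or the paper's $F_{(k)}=e^{\mathrm{i}N_{k+1}}\hat F_{(k+1)}$), not $F_\infty$. For example, take $H^1(E_1)\otimes H^1(E_2)$ with $\Omega_i=\alpha_i+t^i\beta_i$, $N\alpha_i=\beta_i$, and $\alpha_i$ real. The induced $F^2$ on the two-dimensional $\mathrm{Gr}^{W(N_1)}_3\cong\alpha_1\otimes H^1(E_2)$ is the real line spanned by $\alpha_1\otimes\alpha_2$, so it is not a Hodge structure of weight $3$. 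Your argument for $a_0$ can be saved: run it for the image of $a_0$ in $F^D_{(k)}$, then pull back along maps that commute with $N_k^\sigma$. Such maps preserve $W^{(k)}$ and do not change $d_k$; this is exactly the paper's route.

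Second, for $N_iq=0$ it is not enough that $q$ spans $I^{0,D-d}$ ``modulo $W_{D-d-1}$'', because a component in $W_{D-d-1}$ need not be killed by $N_i$. You need the sharper conjugation rule
$\overline{I^{D,d}}\subseteq I^{d,D}\oplus\bigoplus_{r<d,\,s<D}I^{r,s}$.
Then $N^d$ annihilates the correction terms, since their first index becomes negative. So $q\in I^{0,D-d}_{(n)}$ exactly, and $N_iq\in I^{-1,D-d-1}=0$.
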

In particular, we emphasize that $q$ is monodromy-invariant, and that its weights are precisely opposite to those of $a_0$. 

\subsubsection*{Condition (2): Vanishing $\|q\|$}
By direct application of the growth estimate~\eqref{eq:growth-theorem} to $q$, together with Lemma~\ref{lem:weights}, we find
\begin{equation}\label{eq:q-growth}
    \|q\|^2 \leq C_\sigma H_\sigma\,,\qquad \text{on the growth sector $R_\sigma$}\,,
\end{equation}
where $C_\sigma>0$ is a constant, and we have introduced the function
\begin{equation}\label{eq:def-Hsigma}
    H_\sigma = \left[\prod_{k=1}^{n-1} \left( \frac{s^{\sigma(k)}}{s^{\sigma(k+1)}}\right)^{-d_k^\sigma}\right] \left(s^{\sigma(n)}\right)^{-d}\,.
\end{equation}
Note that each ratio appearing in \eqref{eq:def-Hsigma} is at least 1, and each exponent $-d^\sigma_k$ is nonpositive. Therefore, we can discard these factors to obtain a bound
\begin{equation}\label{eq:q-growth_bound}
    \|q\|^2 \leq C_\sigma\left(s^{\sigma(n)}\right)^{-d}\,,\qquad \text{on the growth sector $R_\sigma$}\,.
\end{equation}
We stress that the above bound holds within the growth sector $R_\sigma$. We can, however, turn \eqref{eq:q-growth_bound} into a uniform bound that is valid across all growth sectors at once as follows. First, by definition of the growth sectors, we have the ordering $s^{\sigma(1)}\geq \cdots \geq s^{\sigma(n)}$, and hence
\begin{equation}
    s^{\sigma(n)} = \min_i\,s^i\,.
\end{equation}
Second, since there are only finitely many growth sectors, we can simply take the maximum $C=\max_\sigma C_\sigma$ over all permutations to obtain the bound
\begin{equation}\label{eq:q-bound-uniform}
    \boxed{\text{(2)}:\quad \|q\|^2 \leq C \left(\mathrm{min}_i\,s^i \right)^{-d}}
\end{equation}
which is valid uniformly on the union of the growth sectors. Since $d>0$ by assumption, we see that necessarily $\|q\|\to 0$ along any trajectory that approaches the boundary point $p_\infty$.  

To complete the proof of Theorem \ref{thm:infinite-distance}, it remains to show that the ratio
\begin{equation}\label{eq:ratio}
    \frac{\|q\| \|\Omega\|}{|(q,\Omega)|}
\end{equation}
is uniformly bounded. First, we show that the denominator in \eqref{eq:ratio} is bounded away from zero, which will establish condition (0). 

\subsubsection*{Condition (0): Nonvanishing $(q,\Omega)$}
Since $q$ is monodromy-invariant by Lemma \ref{lem:weights} and the monodromy operators preserve the polarization \eqref{eq:preserve-polarization}, it follows that
\begin{align}
    (q,\Omega) &= \left(q ,\mathrm{exp}\left(\sum_{i=1}^n t^iN_i \right) A(z)\right)\,,\\    
    &=\left(\mathrm{exp}\left(-\sum_{i=1}^n t^iN_i \right) q ,A(z)\right)\,,\\
    \label{eq:invariance}&=(q,A(z))\,,\\
    &=(q,a_0)+\cdots\,.
\end{align}
Using the fact that $(q,a_0)\neq 0$, together with the holomorphicity of $A(z)$, we find that $|(q,\Omega)|$ can be uniformly bounded in an open polydisc around $z^i=0$ as
\begin{equation}\label{eq:(q,Omega)-bound}
    \boxed{\text{(0)}:\quad c_- < |(q,\Omega)|<c_+}
\end{equation}
for some constants $c_{\pm}>0$. 

\subsubsection*{Condition (1): Bounded charge-to-mass ratio}
Next, we show that the product $\|q\| \|\Omega\|$ is uniformly bounded. To see this, we use the fact that
\begin{equation}\label{eq:Omega-scaling}
    \|\Omega\|^2 \leq C_\sigma' H_\sigma^{-1}\,,\qquad \text{on the growth sector $R_\sigma$}\,,
\end{equation}
for some constant $C_\sigma'>0$, where we recall the definition of the function $H_\sigma$ in \eqref{eq:def-Hsigma}. This bound deserves some extra explanation, because $\Omega$ in principle carries additional moduli dependence, in contrast to the fixed charge $q$, hence one cannot directly apply the standard growth theorem. However, as argued in the original work \cite{CKS}, see also \cite[Theorem~4.8 and Lemma~4.10]{BakkerKlinglerTsimerman2020}, one can apply the growth theorem to $a_0$ and argue that neither the additional factor $\mathrm{exp}\left( \sum_i t^i N_i\right)$ nor the holomorphic corrections in $A(z)$ affect the asymptotic growth of the Hodge norm. The bound \eqref{eq:Omega-scaling} then follows again from applying Lemma~\ref{lem:weights}. 

Importantly, due to the fact that $\|q\|$ and $\|\Omega\|$ scale precisely oppositely, we find that
\begin{equation}
    \|q\| \|\Omega\|\leq \sqrt{C_\sigma C_\sigma'}\,,\qquad \text{on the growth sector $R_\sigma$}\,.
\end{equation}
Now we apply the same argument as before, taking the maximum $C'=\max_\sigma C'_\sigma$, in order to obtain a uniform bound
\begin{equation}\label{eq:|q||Omega|-bound}
    \|q\| \|\Omega\| \leq \sqrt{C C'}\,.
\end{equation}
Together, equations \eqref{eq:(q,Omega)-bound} and \eqref{eq:|q||Omega|-bound} imply 
\begin{equation}\label{eq:bound-final}
    \boxed{\text{(1)}:\quad \frac{\|q\| \|\Omega\|}{|(q,\Omega)|}<C''\,,\qquad C''=\frac{\sqrt{C C'}}{c_-}}
\end{equation}
Thus, $q$ satisfies the desired conditions \eqref{eq:cond0}--\eqref{eq:cond2}. Therefore, by Proposition~\ref{prop:distance_criterion}, the boundary point $p_\infty$ lies at infinite distance along any trajectory in the union of the growth sectors, i.e.~in
\begin{equation}
    \{\min_i s^i\geq A\,,\quad0\leq a^i\leq 1\}\,,
\end{equation}
for some sufficiently large $A>1$. The situation is  depicted in Figure~\ref{fig:uniform}.

\begin{figure}
    \centering
    \includegraphics[width=0.5\linewidth]{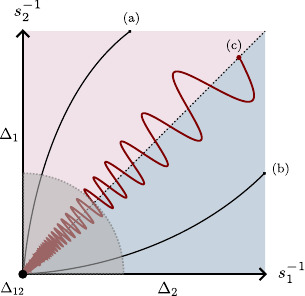}
    \caption{A local depiction near the intersection $\Delta_{12}$ of two boundary divisors $\Delta_1$ and $\Delta_2$, restricted to the saxionic directions. The trajectories (a) and (b) stay within the growth sectors $s^1\geq s^2$ and $s^2\geq s^1$, respectively, while the trajectory (c) oscillates between them. The proof of Theorem~\ref{thm:infinite-distance} constructs an auxiliary charge $q$ for which the boundedness results \eqref{eq:q-bound-uniform} and \eqref{eq:bound-final} hold uniformly in a neighbourhood (depicted in grey) of the boundary point $p_\infty\in\Delta_{12}$. Since trajectory (c) eventually enters this neighbourhood, the proof applies equally well to it.}
    \label{fig:uniform}
\end{figure}

\subsubsection*{Axion winding}
To fully close the proof, we must also address trajectories that feature winding in the axionic directions, i.e.~for which the axions $a^i$ exit the fundamental domain $0\leq a^i\leq 1$. This can be addressed straightforwardly by recalling that the action $a^i\mapsto a^i+1$ precisely corresponds to the monodromy action. Crucially, since $q$ is monodromy-invariant by Lemma~\ref{lem:weights}, and the monodromy action preserves the polarization, the quantities $\|q\|$, $\|\Omega\|$ and $(q,\Omega)$ are all invariant under integer axion shifts.\footnote{To be explicit, under $a^i\mapsto a^i+1$, we have for example that $\|q\|\mapsto \|T_i^{-1} q\|=\|q\|$, where the second step follows from the monodromy-invariance of $q$. Instead, for $\|\Omega\|$ one uses the fact that $\|\cdot\|$ and $\Omega$ transform in precisely the opposite way under monodromy. Finally, for $(q,\Omega)$ it follows from \eqref{eq:invariance}.} Thus, all estimates established above extend from the fundamental domain to all axion values.

Since every curve that approaches $p_\infty$ eventually enters a neighbourhood in which the above estimates hold, this completes the proof of Theorem~\ref{thm:infinite-distance}. Together with the result of Lee \cite[Theorem~2.2]{Lee2018}, this proves Theorem~\ref{thm:main}, thus establishing a necessary and sufficient criterion for infinite distance in full generality. 

\section{Conclusions}
\label{sec:discussion}

In this work, we investigated infinite-distance limits in the complex structure moduli space of Calabi--Yau manifolds. In particular, we established a criterion for infinite distance at intersections of boundary divisors that applies to the full Weil--Petersson metric. The proof is based on the inequality \eqref{eq:distance_bound}, which allowed us to bound the length of a trajectory in terms of the BPS mass $\mathcal{M}_q$. Of crucial importance is the construction of a monodromy-invariant auxiliary charge $q=N^d\overline{a_0}$ for which the charge-to-mass ratio is uniformly bounded in the limit, using the tools of asymptotic Hodge theory. The main technical step is Lemma~\ref{lem:weights}, which determines the weights of $q$ with respect to the weight filtrations for every ordering of the saxions. This allows the estimates in the individual growth sectors to be combined into a bound throughout a neighbourhood of the boundary point. 

Although initially formulated in the language of Type IIB compactifications, the argument applies more generally to effective polarized variations of Hodge structure of Calabi--Yau $D$-fold type under the assumptions of Theorem~\ref{thm:infinite-distance}. From a bottom-up perspective, it is well-known that the special K\"ahler geometry of the vector multiplet sector of four-dimensional $\mathcal{N}=2$ supergravity contains precisely the data of such an effective polarized real variation of Hodge structure of Calabi--Yau threefold type, see for example \cite{Cecotti:2020rjq}. It is therefore an interesting question to ask whether our results apply more generally to that setting. Of vital importance in this regard is to what extent the quasi-unipotency of the monodromy can be relaxed. In fact, Sabbah and Schnell recently established versions of the nilpotent orbit theorem and a simplified $\mathrm{Sl}(2)$-orbit theorem without this assumption for one-parameter degenerations \cite{SabbahSchnell2022}, see also \cite{Deng2022} for related results, as well as \cite{Kaufmann_to_appear} for further discussions concerning bottom-up arguments for the Emergent String Conjecture in this setting. 

Returning to the geometric setting, we note that the rationality of the monodromy operators and their weight filtrations allows one to replace the auxiliary charge $q$ by an \textit{integral} charge with the same properties. Thus, such a charge can, in principle, be realized by a physical state, although integrality alone does not establish the existence of that state.\footnote{For a different use of complex-valued charges in the interpretation of threshold corrections and the Emergence Proposal, see \cite{Hattab:2025aok,Hattab:2026lho}.} In fact, it precisely corresponds to the Kaluza--Klein modes that were identified in \cite{Hassfeld:2025uoy,Monnee:2025ynn} based on a geometric analysis of Calabi--Yau degenerations, in accordance with expectations from the Emergent String Conjecture. Our analysis therefore provides evidence that in particular the enhancement analysis of \cite{Monnee:2025ynn} generalizes to arbitrary trajectories. It is also noteworthy that the distance criterion itself does not require a tower: a \textit{single} state whose mass vanishes and whose charge-to-mass ratio remains uniformly bounded throughout a neighbourhood of the boundary already suffices to show that the limit is at infinite distance.\footnote{Note, however, that we are \textit{not} claiming that integrating out this single state generates the infinite-distance limit.}

\subsubsection*{Bounding the metric through scalar quantities}
A broader lesson of the proof is that exact statements about distances can sometimes be obtained by controlling quantities that are simpler than the metric itself. To make this precise, consider a moduli space with real coordinates $\phi^A$ and a metric $G_{AB}$. Suppose that there exists a function $f$ satisfying
\begin{equation}
    \left|v^A\partial_A\log f\right|\leq C\sqrt{G_{AB}v^Av^B}
    \label{eq:discussion-scalar-bound}
\end{equation}
for every tangent vector $v$, with a constant $C>0$ that is uniform in a neighbourhood of a boundary point. Integrating along a trajectory contained in this neighbourhood then gives
\begin{equation}
    L(\gamma)\geq\frac{1}{C}
    \left|\log\frac{f(p_2)}{f(p_1)}\right|\,.
\end{equation}
In particular, if $f\to0$ or $f\to\infty$ at the boundary point, any trajectory approaching it will have infinite length. To apply this strategy in general, the main task is thus to identify a suitable function $f$ and establish the inequality uniformly. In our setting, the inequality follows from the Cauchy--Schwarz inequality applied to the Hodge inner product, while the required asymptotic input consists of bounds on periods and Hodge norms. Similar reasoning may be useful in other settings where masses, tensions, or other physical observables are more tractable than the full metric, see also \cite{Etheredge:2023zjk} for related ideas. 

\subsubsection*{4d $\mathcal{N}=1$ K\"ahler geometry}
In this regard, it would be interesting to establish analogous statements in four-dimensional compactifications with $\mathcal{N}=1$ supersymmetry. The field space metric for the scalars in the chiral multiplets is K\"ahler, but need not possess the special K\"ahler structure we used in the $\mathcal{N}=2$ vector multiplet sector. Our example in Section~\ref{subsec:limits} is therefore particularly relevant: the K\"ahler property in principle allows two infinite-distance divisors to intersect at finite distance. However, it is not clear whether this behaviour can actually occur in a moduli space arising from a consistent string compactification. In addition, the importance of addressing this question beyond the classical approximation is illustrated by the quantum obstructions recently studied in \cite{Kaufmann:2026fli,Kaufmann:2026mha,Kaufmann:2026tsy}. In these works, classical infinite-distance limits in the complex-structure sector of $\mathcal{N}=1$ compactifications can encounter unsuppressed quantum corrections, see also \cite{Klaewer:2020lfg}. This raises the question of how the corresponding asymptotic regions are described by the full physical metric. An extension of our strategy could provide a way to address this question through suitably controlled masses or tensions, without requiring an explicit determination of the full metric. In this regard, EFT strings \cite{Lanza:2020qmt,Lanza:2021udy} provide a natural starting point to address this question, with their tensions as candidate functions. We hope to return to this question in future work. 

\subsubsection*{Asymptotic geodesics and corrections}
Returning to the Calabi--Yau threefold setting, another natural question is whether one can determine the leading asymptotic behaviour of \textit{geodesics} while retaining control over corrections to the leading approximation to the K\"ahler potential, see also \cite{Blumenhagen:2018nts,Calderon-Infante:2022nxb,Etheredge:2023zjk,Raman:2024fcv,Ruehle:2024ufw} for related work on geodesics. In particular, in recent work \cite{Baines:2025upi} a group-theoretic characterization of geodesics was given in locally symmetric moduli spaces. It would be interesting if this analysis could be extended to the present setting. It should be stressed, however, that in the present setting this question is expected to be more difficult due to the generic presence of metric-essential instantons \cite{Bastian:2021eom,Bastian:2021hpc}. Importantly, since the geodesic equations involve the inverse of the metric and its derivatives, such corrections to the K\"ahler potential may in fact change the leading behaviour of geodesics whenever the leading metric contains null directions. We hope to investigate this question in future work.

\subsection*{Acknowledgements}
We thank Florent Baume, Thomas Grimm, Lukas Kaufmann, and Timo Weigand for interesting discussions and feedback. This work is supported in part by Deutsche Forschungsgemeinschaft under Germany’s Excellence Strategy EXC 2121 Quantum Universe 390833306, by Deutsche Forschungsgemeinschaft through a German-Israeli Project Cooperation (DIP) grant “Holography and the Swampland” and by Deutsche Forschungsgemeinschaft through the Collaborative Research Center 1624 “Higher Structures, Moduli Spaces and Integrability.” 

\newpage
\appendix

\section{Additional background and proof of Lemma \ref{lem:weights}}
\label{app-lemma}

In this appendix, we prove Lemma~\ref{lem:weights}. The main additional ingredients we will need are the properties of Deligne splittings and the commuting $\mathfrak{sl}(2)$-algebras that are provided by the $\mathrm{Sl}(2)$-orbit theorem. We first recall the necessary background in Section~\ref{subsec:preliminaries}. The proof of Lemma~\ref{lem:weights} is then given in Section~\ref{subsec:lem-proof}. 

\subsection{Additional background}
\label{subsec:preliminaries}
We (mostly) follow the notation of the physics-oriented reviews \cite{vandeHeisteeg:2022gsp,Monnee:2024gsq}. For further mathematical details and proofs we refer the reader to \cite{schmid,CKS,CattaniKaplan1989}. 

\subsubsection*{Limiting mixed Hodge structures}
Let $F_{(n)}$ denote the limiting Hodge filtration at the boundary point $p_\infty$, arising in the nilpotent orbit expansion of the Hodge filtration
\begin{equation}
    F^p \approx F^p_{\mathrm{nil}}= \mathrm{exp}\left(\sum_{i=1}^n t^i N_i\right)F_{(n)}^p\,.
\end{equation}
In particular, $F_{(n)}^D=\mathbb C a_0$. The nilpotent orbit theorem and its mixed-Hodge-theoretic consequences imply that
\begin{equation}
    \bigl(W(N_{(n)}),F_{(n)}\bigr)\,,
    \qquad N_{(n)}=\sum_{i=1}^nN_i\,,
\end{equation}
is a polarized mixed Hodge structure. Note that in the main text $N_{(n)}$ was denoted as $N$. 

More generally, a mixed Hodge structure $(W,F)$ admits a canonical decomposition
\begin{equation}
    H_{\mathbb C}=\bigoplus_{p,q}I^{p,q}\,,
    \qquad
    F^p=\bigoplus_{r\geq p}I^{r,s}\,,
    \qquad
    W_\ell=\bigoplus_{r+s\leq\ell}I^{r,s}\,,
    \label{eq:app-deligne-filtrations}
\end{equation}
called the \emph{Deligne splitting} $I^{r,s}$. In general, the spaces $I^{p,q}$ satisfy the following property under complex conjugation:
\begin{equation}
    \overline{I^{p,q}}=I^{q,p}
    \quad\bmod\bigoplus_{\substack{r<q\\ s<p}}I^{r,s}\,.
    \label{eq:app-deligne-conjugation}
\end{equation}
In words, $I^{p,q}$ is related to $\overline{I^{q,p}}$ up to terms that are positioned lower in the splitting. For the limiting mixed Hodge structures considered here, effectivity implies that $I^{p,q}=0$ unless $0\leq p,q\leq D$. 

We will denote the Deligne splitting of $(W(N_{(n)}),F_{(n)})$ by $I^{p,q}_{(n)}$. Of crucial importance is the action of the log-monodromy operators $N_i$ on the Deligne splitting:
\begin{equation}
    N_i I^{p,q}_{(n)}\subseteq I^{p-1,q-1}_{(n)}\,.
    \label{eq:app-monodromy-bidegree}
\end{equation}

\subsubsection*{Real splittings}

A mixed Hodge structure is called \emph{$\mathbb R$-split} if the conjugation relation is exact, namely $\overline{I^{p,q}}=I^{q,p}$. In this case, the Deligne splitting takes the simpler form
\begin{equation}
    I^{p,q}=F^p\cap\overline{F^q}\cap W_{p+q}\,.
    \label{eq:app-real-deligne}
\end{equation}
For a general mixed Hodge structure, a theorem of Deligne states that there exists a real nilpotent operator $\delta$ such that
\begin{equation}
    \widetilde F=e^{-\mathrm i\delta}F
\end{equation}
defines an $\mathbb R$-split mixed Hodge structure with the same weight filtration. The operator $\delta$ satisfies two important properties:
\begin{equation}\label{eq:prop-delta}
    [\delta, N_i] = 0\,,\qquad \delta(I^{p,q})\subseteq \bigoplus_{r<p, s<q} I^{r,s}\,.
\end{equation}
The first property states that $\delta$ commutes with all log-monodromy operators. The second property implies that $\delta$ always lowers both Deligne indices $(p,q)$ by at least one. 

The splitting used in the several-variable $\mathrm{Sl}(2)$ theorem involves a further real operator $\zeta$ and is given by\footnote{The notation for the operator $\zeta$ should not be confused with the spectator coordinates $\zeta^\alpha$.}
\begin{equation}
    \hat F =e^\zeta e^{-\mathrm i\delta}F\,.
    \label{eq:app-splitting-map}
\end{equation}
Here, $\zeta$ is a universal Lie polynomial in the components of $\delta$ with respect to the intermediate Deligne splitting defined by $\widetilde F$, see for example \cite{vandeHeisteeg:2022gsp,Monnee:2024gsq} for explicit expressions for the cases $D\leq 4$. In particular, it satisfies the same properties \eqref{eq:prop-delta}. The corresponding Deligne splitting will be denoted by
\begin{equation}
    \hat I^{p,q}=e^\zeta e^{-\mathrm i\delta}I^{p,q}\,,
    \label{eq:app-split-bigrading}
\end{equation}
and will be referred to as being \textit{sl(2)-split}.

\subsubsection*{A web of limiting mixed Hodge structures}
So far, we have discussed the properties of the limiting mixed Hodge structure associated with the sum of all monodromies $N_{(n)}$, which captured the features of sending all saxions $s^1,\ldots, s^n\to\infty$. In the following, we recall how to construct all other limiting mixed Hodge structures. 

Fix a permutation $\sigma$. For notational simplicity, we relabel the monodromy logarithms according to this ordering and write
\begin{equation}
    N_{(k)}=N_1+\cdots+N_k\,,\qquad
    W^{(k)}=W(N_{(k)})\,,\qquad
    d_k=\max\{r\in\mathbb Z_{\geq0}:N_{(k)}^r a_0\neq0\}\,.
    \label{eq:app-ordered-data}
\end{equation}
Thus $N_{(k)}$ and $d_k$ stand for $N_k^\sigma$ and $d_k^\sigma$ in the main text, while $N_{(n)}=N$ and $d_n=d$. The compatible splittings are then constructed recursively. First, let
\begin{equation}
    \hat F_{(n)}=e^{\zeta_{(n)}} e^{-\mathrm{i}\delta_{(n)}}F_{(n)}
\end{equation}
be the splitting \eqref{eq:app-splitting-map} of the full limiting mixed Hodge structure. For $k=n-1,\ldots, 1$, we recursively define
\begin{equation}\label{eq:filtration-recursion}
    F_{(k)}=e^{\mathrm{i}N_{k+1}}\hat F_{(k+1)}\,,
    \qquad
    \hat F_{(k)}
    =e^{\zeta_{(k)}}e^{-\mathrm{i}\delta_{(k)}}F_{(k)}\,.
\end{equation}
At each stage, $(W^{(k)},F_{(k)})$ is a mixed Hodge structure polarized by $N_{(k)}$, and every $N_i$ with $i\leq k$ has bidegree $(-1,-1)$. The operators $\zeta_{(k)}$ and $\delta_{(k)}$ therefore satisfy
\begin{equation}
    [\delta_{(k)},N_i]=[\zeta_{(k)},N_i]=0\,,
    \qquad i\leq k\,.
\end{equation}
The sl(2)-split mixed Hodge structure
$(W^{(k)},\hat F_{(k)})$ is also polarized by $N_{(k)}$. We denote its Deligne splitting by $\hat I^{p,q}_{(k)}$. 

\subsubsection*{Commuting $\mathfrak{sl}(2)$-algebras}
Given an sl(2)-split mixed Hodge structure $\hat{I}^{p,q}_{(k)}$ constructed above, we define a grading operator $N^0_{(k)}$ as follows
\begin{equation}
    N^0_{(k)}v=(p+q-D)v\,,
    \qquad v\in\hat I^{p,q}_{(k)}\,.
    \label{eq:app-grading-action}
\end{equation}
By definition, the eigenspaces of the grading operators $N^0_{(k)}$ of weight at most $\ell$ together recover the weight filtration $W^{(k)}_{D+\ell}$. In other words:
\begin{equation}
    W^{(k)}_{D+\ell}
    =\bigoplus_{\lambda\leq\ell}
    \ker\!\left(N^0_{(k)}-\lambda\,\mathrm{id}\right)\,.
    \label{eq:app-weight-eigenspaces}
\end{equation}
The several-variable $\mathrm{Sl}(2)$-orbit theorem associates to the above construction $n$ mutually commuting real $\mathfrak{sl}(2)$-triples $(N_i^-,N_i^+,N_i^0)$. The precise details of the construction will not be important for us, we will instead recall only the relevant properties. First, their generators satisfy the standard commutation relations
\begin{equation}
    [N_i^0,N_i^\pm]=\pm2N_i^\pm\,,
    \qquad [N_i^+,N_i^-]=N_i^0\,,
    \label{eq:app-sl2-relations}
\end{equation}
and generators belonging to different triples commute. Furthermore, all generators preserve the polarization infinitesimally. The grading operators above are related to the $N^0_i$ by
\begin{equation}
    N^0_{(k)}=\sum_{i=1}^kN_i^0\,.
    \label{eq:app-cumulative-neutral}
\end{equation}
In particular, together with the commutation relations \eqref{eq:app-sl2-relations} this implies 
\begin{equation}
    [N^0_{(k)},N^0_{(\ell)}]=0\,,
    \qquad [N^0_{(k)},N_j^-]=0\quad\text{for }j>k\,.
    \label{eq:app-cumulative-commutators}
\end{equation}
We note that the lowering operators need not coincide with the original $N_i$, but instead arise as suitable projections of $N_i$ on certain eigenspaces of the adjoint action of the grading operators $N^0_k$. This in particular guarantees the second relation in \eqref{eq:app-cumulative-commutators}. 

In terms of the lowering operators $N^-_i$, one finds the following compatibility between the various filtrations that appear in the sl(2)-split mixed Hodge structures:
\begin{equation}
    \hat F_{(k)}
    =e^{\mathrm iN_{k+1}^-}\hat F_{(k+1)}
    =\exp\!\left(\mathrm i\sum_{j=k+1}^nN_j^-\right)
      \hat F_{(n)}\,.
    \label{eq:app-compatible-filtrations}
\end{equation}
This relation will play an important role in the proof of Lemma~\ref{lem:weights}, as we will now discuss. 

\subsection{Proof of Lemma~\ref{lem:weights}}
\label{subsec:lem-proof}

For convenience, we restate the lemma in the notation of the main text.

\noindent\textbf{Lemma~\ref{lem:weights}.}
Let $q=N^d\overline{a_0}$. For every permutation $\sigma\in S_n$ and every $1\leq k\leq n$, we have
\begin{equation}
    a_0\in W(N_k^\sigma)_{D+d_k^\sigma}\,,
    \qquad
    q\in W(N_k^\sigma)_{D-d_k^\sigma}\,.
\end{equation}
Furthermore,
\begin{equation}
    N_iq=0\qquad\text{for all }i=1,\ldots,n\,.
\end{equation}

\begin{proof}
Fix an ordering $\sigma$ and use the notation \eqref{eq:app-ordered-data}. Let us first determine the position of $a_0$ and $q$ in the full limiting mixed Hodge structure. Since $F^D_{(n)}$ is one-dimensional, it equals $I_{(n)}^{D,r}$ for some $0\leq r \leq D$. The isomorphism
\begin{equation}
    N_{(n)}^r:\quad \mathrm{Gr}_{D+r}^{W^{(n)}}\longrightarrow \mathrm{Gr}_{D-r}^{W^{(n)}}
\end{equation}
then gives $N^ra_0\neq 0$. At the same time, $N^{r+1}a_0=0$ since it would have second Deligne index equal to $-1$. Hence, we conclude that $d=r$ and thus
\begin{equation}
    F_{(n)}^D=I^{D,d}_{(n)}=\mathbb C a_0\,.
    \label{eq:app-top-component}
\end{equation}
By \eqref{eq:app-deligne-conjugation}, the conjugate $\overline{a_0}$ lies in $I^{d,D}_{(n)}$ up to components in $I^{r,s}_{(n)}$ with $r<d$ and $s<D$. Applying $N^d$ annihilates all these lower components, since their first index would become negative. Consequently,
\begin{equation}
    q\in I^{0,D-d}_{(n)}\,.
    \label{eq:app-q-component}
\end{equation}
Equation~\eqref{eq:app-monodromy-bidegree} and the fact that $I^{-1,D-d-1}=\emptyset$ now imply
\begin{equation}
    \boxed{N_iq=0\,,
    \qquad i=1,\ldots,n\,.}
    \label{eq:app-q-monodromy-invariant}
\end{equation}
This proves the monodromy-invariance statement.

\subsubsection*{Position of $\hat{a}_0$}
Next, set $\hat a_0=e^{\zeta_{(n)}}e^{-i\delta_{(n)}}a_0$. Since $q$ has first Deligne index zero, the operators $\delta_{(n)}$ and $\zeta_{(n)}$ act annihilate $q$ by \eqref{eq:prop-delta}. Thus
\begin{equation}
    \hat a_0\in\hat I^{D,d}_{(n)}\,,
    \qquad q\in\hat I^{0,D-d}_{(n)}\,.
\end{equation}
Moreover, since $\delta_{(n)}$ and $\zeta_{(n)}$ preserve the polarization, we have $(q,\hat{a}_0)=(q,a_0)\neq 0$. 

To identify the weights of $\hat a_0$ under the various grading operators $N^0_{(k)}$, we recursively define (recall \eqref{eq:filtration-recursion})
\begin{equation}
    \hat a_0^{(n)}=\hat a_0\,,
    \qquad
    \hat a_0^{(k)}=e^{\zeta_{(k)}}e^{-\mathrm{i}\delta_{(k)}}e^{\mathrm iN_{k+1}}                        \hat a_0^{(k+1)}\,,
    \label{eq:app-partial-top-generators}
\end{equation}
such that $\hat{a}_0^{(k)}\in \hat F_{(k)}^D$. For a fixed $k$, every map occurring between $a_0$ and $\hat a_0^{(k)}$ is invertible and commutes with $N_{(k)}$. Therefore,
\begin{equation}
    N_{(k)}^r\hat a_0^{(k)}=0
    \quad\Longleftrightarrow\quad
    N_{(k)}^r a_0=0\,.
\end{equation}
Applying the same argument leading to \eqref{eq:app-top-component} to the mixed Hodge structure $(W^{(k)},\hat F_{(k)})$, we obtain
\begin{equation}
    \hat F^D_{(k)}=\hat I^{D,d_k}_{(k)}\,,
    \qquad
    N^0_{(k)}\hat a_0^{(k)}=d_k\hat a_0^{(k)}\,.
    \label{eq:app-partial-top-weight}
\end{equation}
We now use the $\mathfrak{sl}(2)$ commutation relations to determine the action of $N^0_{(k)}$ on $\hat a_0$. Equation~\eqref{eq:app-compatible-filtrations} and the second relation in \eqref{eq:app-cumulative-commutators} imply
\begin{align}
    N^0_{(k)}\hat F^p_{(n)}
    &=N^0_{(k)}
      \exp\!\left(-\mathrm i\sum_{j=k+1}^nN_j^-\right)
      \hat F^p_{(k)}\notag\\
    &=\exp\!\left(-\mathrm i\sum_{j=k+1}^nN_j^-\right)
      N^0_{(k)}\hat F^p_{(k)}\notag\\
    &\subseteq\hat F^p_{(n)}\,.
    \label{eq:app-grading-preserves-full-filtration}
\end{align}
Setting $p=D$, it follows that
\begin{equation}
    N^0_{(k)}\hat a_0=d_k\hat a_0\,.
    \label{eq:app-full-top-partial-weight}
\end{equation}
Equation~\eqref{eq:app-weight-eigenspaces} therefore gives
\begin{equation}
    \hat a_0\in W^{(k)}_{D+d_k}\,.
\end{equation}
Finally, the transformation $e^{\zeta_{(n)}}e^{-i\delta_{(n)}}$ commutes with $N_{(k)}$, and hence preserves the weight filtration $W^{(k)}$. Applying its inverse to $\hat a_0\in W^{(k)}_{D+d_k}$ therefore gives
\begin{equation}\label{eq:pos-a0}
    \boxed{a_0\in W^{(k)}_{D+d_k}}
\end{equation}

\subsubsection*{Position of $q$}
To determine the corresponding eigenvalue of $q$, recall that
\begin{equation}
    \hat I^{p,q}_{(n)}
    =\hat F^p_{(n)}\cap
      \overline{\hat F^q_{(n)}}\cap W^{(n)}_{p+q}\,.
\end{equation}
The operator $N^0_{(k)}$ preserves the first factor by \eqref{eq:app-grading-preserves-full-filtration}, and the second because it is real. It preserves the third because it commutes with $N^0_{(n)}$, which grades $W^{(n)}$. Hence $N^0_{(k)}$ preserves every $\hat I^{p,q}_{(n)}$. In particular, it acts on the one-dimensional space $\hat I^{0,D-d}_{(n)}=\mathbb C q$ as
\begin{equation}
    N^0_{(k)}q=\lambda_k q\,,
\end{equation}
where $\lambda_k$ is some eigenvalue which we now determine. Using preservation of the polarization, we  find
\begin{align}
    0&=(N^0_{(k)}q,\hat a_0)
       +(q,N^0_{(k)}\hat a_0)\notag\\
     &=(\lambda_k+d_k)(q,\hat a_0)\,.
\end{align}
The pairing $(q,\hat{a}_0)$ is non-zero, hence $\lambda_k=-d_k$. Equation~\eqref{eq:app-weight-eigenspaces} therefore gives
\begin{equation}\label{eq:pos-q}
    \boxed{q\in W^{(k)}_{D-d_k}}
\end{equation}

\subsubsection*{Completing the proof}
Since the ordering was arbitrary, the inclusions \eqref{eq:pos-a0} and \eqref{eq:pos-q} hold for every permutation $\sigma$. Together with the monodromy-invariance \eqref{eq:app-q-monodromy-invariant}, this proves all statements of Lemma~\ref{lem:weights}.
\end{proof}

\bibliographystyle{JHEP}
\bibliography{references}

\end{document}